    \setlist[itemize]{noitemsep, topsep=0pt}
    \setlist[enumerate]{noitemsep, topsep=0pt}
\newcommand{\ket}[1]{|#1\rangle}
\newcommand{\be}{\begin{equation}}
\newcommand{\ee}{\end{equation}}
\newcommand{\<}{\langle}
\renewcommand{\>}{\rangle}
\newcommand{\Tr}{{\rm Tr\,}}
\definecolor{crimson}{RGB}{220, 20, 60}
\newcommand{\ar}{a}
\newcommand{\br}{b}
\newcommand{\crf}{c}
\newcommand{\dr}{d}
\newcommand{\er}{e}
\newcommand{\arc}{a.}
\newcommand{\brc}{b.}
\newcommand{\crc}{c.}
\newcommand{\drc}{d.}
\newcommand{\erc}{e.}
\newcommand{\frc}{f.}
\newcommand{\ibmalmaden}{IBM Quantum, IBM Research -- Almaden, San Jose CA, 95120, USA}
\newcommand{\ibmyorktown}{IBM Quantum, IBM T.J. Watson Research Center, Yorktown Heights, 10598, USA}
\newcommand{\deeusc}{Department of Electrical Engineering, Viterbi School of Engineering,
University of Southern California, Los Angeles, CA 90089, USA}
\newcommand{\ibmcambridge}{IBM Quantum, MIT-IBM Watson AI Lab,  Cambridge MA, 02142, USA}
\newcommand{\yale}{Department of Physics, Yale University, New Haven CT, 06520, USA}
\renewcommand{\vec}[1]{{\bf #1}}
\newtheorem{prop}{Proposition}
\newsavebox{\@brx}
\newcommand{\llangle}[1][]{\savebox{\@brx}{\(\m@th{#1\langle}\)}
  \mathopen{\copy\@brx\kern-0.5\wd\@brx\usebox{\@brx}}}
\newcommand{\rrangle}[1][]{\savebox{\@brx}{\(\m@th{#1\rangle}\)}
  \mathclose{\copy\@brx\kern-0.5\wd\@brx\usebox{\@brx}}}
\renewcommand{\>}{\rangle}
\renewcommand{\vec}{\boldsymbol}
\newcommand{\ourtitle}{Uncovering Local Integrability in Quantum Many-Body Dynamics}
\renewcommand{\fnum@figure}{\textbf{Fig.~\thefigure}}
\def\@caption@fignum@sep{\textbf{.} }
\begin{document}
\makeatother

\title{\ourtitle}

\author{Oles Shtanko$^{*\dag}$}
\affiliation{\ibmalmaden}
\altaffiliation{These authors contributed equally: Oles Shtanko, Derek Wang;  $^\dag$Corresponding authors: oles.shtanko@ibm.com (Oles Shtanko),\\ zlatko.minev@ibm.com (Zlatko Minev)}
\author{Derek S. Wang$^{*}$}
\affiliation{\ibmyorktown}
\author{Haimeng Zhang}
\affiliation{\ibmyorktown}
\affiliation{\deeusc}
\author{Nikhil~Harle}
\affiliation{\ibmyorktown}
\affiliation{Department of Physics, Yale University, New Haven CT, 06520, USA}
\author{Alireza Seif}
\affiliation{\ibmyorktown}
\author{Ramis Movassagh}
\affiliation{\ibmcambridge}
\author{Zlatko Minev$^\dag$}
\affiliation{\ibmyorktown}

\maketitle

\textbf{Interacting many-body quantum systems and their dynamics, while fundamental to modern science and technology, are formidable to simulate and understand.
However, by discovering their symmetries, conservation laws, and integrability one can unravel their intricacies. Here, using up to 124 qubits of a fully programmable quantum computer, we uncover local conservation laws and integrability in one- and two-dimensional periodically-driven spin lattices in a regime previously inaccessible to such detailed analysis. We focus on the paradigmatic example of disorder-induced ergodicity breaking, where we first benchmark the system crossover into a localized regime through anomalies in the one-particle-density-matrix spectrum and other hallmark signatures. We then demonstrate that this regime stems from hidden local integrals of motion by faithfully reconstructing their quantum operators, thus providing a more detailed portrait of the system's integrable dynamics. Our results demonstrate a versatile strategy for extracting the hidden dynamical structure from noisy experiments on large-scale quantum computers.}\\

A quantum system is entirely characterized by knowledge of its energy levels and wavefunctions---in principle. In practice, however, systems comprising numerous interacting particles lie beyond the jurisdiction of such a description. Their exponentially numerous levels, vanishing inter-level spacings, and substantial computational and resource costs \cite{Kim2023utility} render this approach impractical. Nonetheless, these complex systems often harbor a latent structure that, if discovered, could unlock a tractable description. 
In classical physics, for example, the identification of a comprehensive set of conserved quantities, or integrals of motion that are often associated with symmetries, yields such an analogous description \cite{arnol2013mathematical}.

Quantum physics presents a fundamentally more intricate scenario. Integrals of motion can exist---they are now operators that remain unchanged under system evolution and can be used to define good quantum numbers.  
When the number of these operators scales extensively with the system size, the system can be regarded as quantum integrable \cite{caux2011remarks}, which can help to unravel its complexity \cite{sutherland2004beautiful}. Unfortunately, for many interacting particles, 
even proving the existence of non-trivial integrals of motion is an intractable mathematical and numerical problem \cite{shiraishi2021undecidability}, let alone quantitatively finding them. 
Can large-scale quantum experiments help us with this problem?

Here, we answer this question in the affirmative by developing a utility-scale experimental protocol to identify conserved operators, incorporating error mitigation and relying exclusively on easily accessible local measurements. As a first application, we focus on a special class of disordered systems that exhibit many-body localization (MBL) \cite{abanin2019colloquium_mbl,basko2006metal,pal2010mbl,nandkishore2015many}. 
In these systems, pronounced disorder is believed to impede the transfer of energy and information \cite{suntajs2020quantum}, a phenomenon attributed to the emergence of a complete set of local conserved operators, or local integrals of motion (LIOMs) \cite{serbyn2013lioms,huse2014lioms,ros2015integrals,imbrie2017local}. These LIOMs offer a detailed system description, providing an alternative to the conventional eigenenergies and eigenstates framework. 
Gaining access to a complete LIOM description enables one to develop a tensor network representation of the system's eigenstates \cite{chandran2015spectral}, establishes bounds on quantum information propagation in disordered systems \cite{kim2014local, friesdorf2015local,burrell2007bounds}, and forms a crucial foundational step in creating efficient sampling algorithms \cite{ehrenberg2022simulation}.

Moreover, the study of LIOMs in experiments could add to the characterization of the finite-size MBL regime \cite{morningstar2022avalanches}, which has been of interest for numerous recent experiments \cite{rispoli2019quantum, 
lukin2019probing, roushan2017spectroscopic, choi2016exploring, schreiber2015experiment, smith2016experiment,bordia2017experiment, bordia2017experiment_2d,guo2021experiment,mi2022time,chiaro2022direct,morong2021observation}, since it  exhibits the fundamental MBL characteristics, such as the slow rate of entanglement growth and the absence of equilibration, at limited system sizes. We note that the ultimate stability of MBL in thermodynamically large systems remains a subject of ongoing debate \cite{wdroeck2017stability,morningstar2022avalanches,sels2022avalanches,leonard2023probing}. However, even for large systems where MBL is not stable, the evolution for initial times is described by a set of approximate prethermal LIOMs.\\

\begin{figure}[t!]
\setlength{\tabcolsep}{10pt}
\centering
\includegraphics[width=0.45\textwidth]
{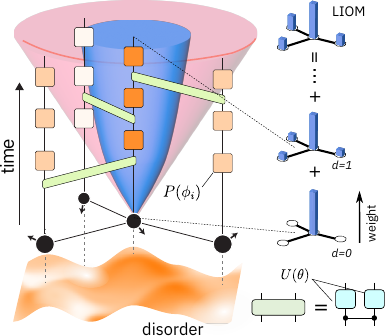}
\caption{\textbf{Many-body dynamics: quantum simulation and integrals of motion.} 
Four interacting spins (black circles) undergoing time-periodic dynamics in a modified kicked Ising model with random on-site disorder---modeled as a digital quantum circuit. Disorder is represented by single-qubit phase gates~$P(\phi_i)$ with site-randomized angles~$\phi_i$ (colored squares). Uniform spin-spin interactions (green rectangles) are mediated by controlled-Z gates followed by rotations~$U$ of strength~$\theta$, defined in Eq.~\eqref{eq:gates}. In the ergodic regime, quantum operators spread rapidly (wide cone). 
Sufficient disorder can remarkably slow down spreading the operators, leading to the many-body localization regime (blue cone). 
Disruptions to ergodicity due to integrability result in hidden conserved quantities linked to local integral of motion (LIOM) operators. We experimentally extract a complete set of LIOMs from the system's time evolution by averaging operators spread.}
\label{fig:schematics}
\end{figure}

In our experiment, we aim to reproduce MBL dynamics and extract LIOMs for 1D and 2D spin systems using circuits with up to 124 qubits and a depth of 60. Our primary goal is to test the extraction algorithms that can be used in the future to study the physics of (quasi)integrable systems. Until recently, measuring of localization  structure was possible only for small systems with low density of excitations \cite{roushan2017spectroscopic,chiaro2022direct}. Crucially, we exploit the quantum processor's large size to overcome finite-size effects in 2D, its programmability for measurements in non-commuting bases, and its operational speed to execute vast numbers of unique digital circuits (over 350,000). We address device noise through the resilience of our protocol and a tailored error-mitigation workflow \cite{Kim2023,Ci2023E,Li2017ZNE, Temme2017PECandZNE,Nation2023Mapomatic,Nation2021M3,Wallman2017Twirling,majumdar2023best,Viola1999DD, Jurcevic2021,Bravyi2021ReadoutMatrixInversion} (see Supplementary Information II.C).

Our spin system, akin to the paradigmatic kicked Ising model \cite{prozen2002general}, undergoes periodic time evolution (see Fig.~\ref{fig:schematics}), or Floquet dynamics. Unlike previous experimental work studying the MBL regime, this system does not exhibit particle conservation. Each evolution cycle is a unitary $U_F$, implemented as a series of gates on two IBM quantum processors, \texttt{ibmq\_kolkata}, \texttt{ibmq\_kyiv}, and \texttt{ibm\_washington}. 
The cycle is built from two-qubit gates (green rectangles in Fig.~\ref{fig:schematics}) comprising controlled-Z gates followed by the parameterized single-qubit gates (orange squares) 
\be\label{eq:gates}
U(\theta) \coloneqq 
\begin{pmatrix}
 \cos \theta/2  &  \sin \theta/2 \\
 \sin \theta/2  & -\cos \theta/2\\
\end{pmatrix}\;\;\mathrm{and}\;\; P(\phi_k) \coloneqq
\begin{pmatrix}
 1  &  0 \\
 0  & e^{i\phi_k}\\
\end{pmatrix}\;.
\ee
The unitary~$U$ introduces transverse field kicks, for non-zero angles~$\theta\in[0,\pi/2]$ and $P$ introduces disorder through random phases $\phi_k\in[-\pi,\pi]$ uniformly sampled over each spin site~$k$. 

For~$\theta=0$, the system dynamics are non-universal and integrable. In 1D, we expect  integrability to be present at small angles $\theta<\theta_c$ for some non-zero critical $\theta_c$ according to the MBL hypothesis. 
Disorder inhibits spread of correlations (narrow cone in Fig.~\ref{fig:schematics}), and leads to an MBL regime \cite{zhang2016floquet,ponte2015floquetmbl,bordia2017experiment}. 
In contrast, for $\theta_c<\theta\leq \pi/2$, the system's integrability is replaced by chaotic behavior leading to local operators irreversibly spreading into non-local and high-weight ones (wide cone) \cite{nahum2018operator}.

To confirm these predictions, we first subject our model to the necessary benchmarks.  
Starting with the more tractable 1D case, using small-system classical numerics, we validate the existence of a critical angle~$\theta_c\approx 0.16\pi$ that separates the ergodic and MBL regimes (see Supplementary Information I.B). In the following, we first present the corresponding experiments to demonstrate the ability of the device to reproduce MBL behavior. Next, we explore LIOMs in 1D systems and study their structure. Finally, we repeat the experiment in two dimensions  \cite{wahl2019signatures}. \\

\begin{figure*}[t]
\setlength{\tabcolsep}{10pt}
\centering
\includegraphics[width=1.0\textwidth]{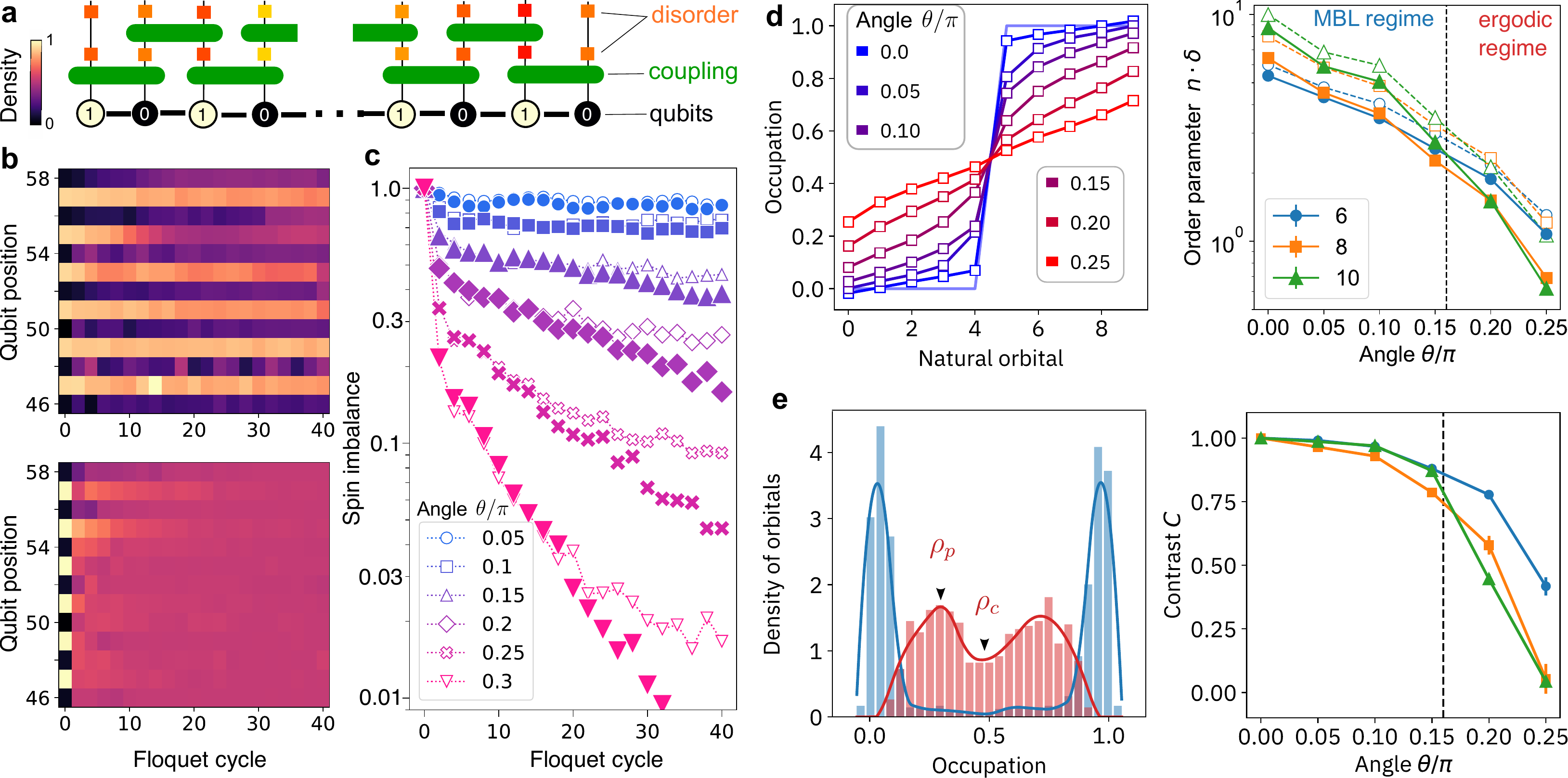}
\caption{\textbf{Conventional benchmark experiments of localization in the 1D model.}
\textbf{\arc} A 104-qubit chain initialized with an antiferromagnetic pattern evolves under a circuit comprising $\theta$-parametrized two-qubit gates (rectangles) and single-qubit disorder gates (squares). Depicted is a single Floquet cycle, which is repeated multiple times. The color of the gates indicates different disorder angles $\phi_i$.\textbf{\brc} Spin density evolution in two distinct regimes: many-body localized (MBL) at~$\theta = 0.1\,\pi$ (top) and ergodic at~$\theta = 0.3\,\pi$ (bottom).
\textbf{\crc} The spin imbalance, $\mathcal{I}$, serves as a key indicator of memory preservation from the initial state. Filled markers represent experimental data, while empty markers and dashed lines correspond to classical simulations. In the deep ergodic regime (e.g. for $\theta = 0.3\pi$), the imbalance decays rapidly within a few Floquet cycles. Conversely, in the many-body localized (MBL) regime (e.g., for $\theta = 0.1\pi$), the imbalance persists for the observation time.
\textbf{\drc} One-particle density matrix (OPDM) spectra after 9 Floquet cycles. Left: OPDM spectra for a 10-qubit chain averaged over 100 disorders; initial-state spectrum in faded blue. Residual spectrum discontinuity is conspicuous for small $\theta$. 
Right: Rescaled discontinuity across different system sizes show different scaling on different sides of classically evaluated $\theta_c\approx 0.16\,\pi$. White dots with dashed lines show the corresponding (noiseless) classical simulation. \textbf{\erc} Peak-to-gap density contrast in OPDM spectrum. Left: Observed OPDM eigenstate densities for 10 qubits in the MBL ($\theta = 0.05\,\pi$, blue) and ergodic ($\theta = 0.2\,\pi$, red) regimes. Right: Contrast as a function of $\theta$. As expected, it saturates in  the MBL regime and decreases with system size in the ergodic regime. Statistical error, calculated using bootstrapping, is represented by error bars. If not displayed, the statistical error is smaller than the marker size.}
\label{fig2:1dlocalization}
\end{figure*}

\textbf{First signature of ergodicity breaking: Memory.} We initialize a 104-qubit, 1D spin chain in a space-periodic antiferromagnetic pattern \cite{schreiber2015experiment,smith2016experiment} depicted in Fig.~\ref{fig2:1dlocalization}\ar. 
We evolve the system under a fixed disorder realization for several values of the transversal angle $\theta$ and obtain the corresponding values of individual spin polarization (see Fig.~\ref{fig2:1dlocalization}\br). The memory of the initial antiferromagnetic state is quantified by the spin imbalance, $\mathcal{I} \coloneqq \frac{n_1 - n_0}{n_1 + n_0}$, defined as the normalized difference between the average polarization of the sites initialized in state zero (spin up) and state one (spin down), denoted as $n_0$ and $n_1 \in [0,1]$, respectively. The order parameter behavior changes as the system transitions from the MBL into the ergodic regime. At small values of the angle, such as $\theta=0.05\,\pi$, the preservation of~$\mathcal{I}$ over 40 Floquet cycles demonstrates the effect of disorder that halts the quantum dynamics and preserves local memory, as expected in the MBL regime.  In contrast, at larger values, such as $\theta=0.3\,\pi$ (see Fig.~\ref{fig2:1dlocalization}\crf), the order parameter~$\mathcal{I}$ decays rapidly. This suggests that the transverse field overpowers disorder and points to an ergodic regime and  quantum information scrambling \cite{long2023phenomenology}. Unlike the systems with particle conservation \cite{lezama2019apparent}, the spin imbalance exhibits exponential asymptotics.\\ 

\textbf{Second signature: Anomalies in the one-particle density matrix (OPDM).} To obtain a more direct indicator of the MBL and ergodic regimes, we consider the spin system as a lattice of interacting hardcore bosons (i.e., site can have at most one excitation), with~$\ket{0}$ and~$\ket{1}$ representing unoccupied and occupied sites, respectively. The  indicator is provided through the OPDM \cite{bera2015opdm,bera2017opdm}, which we generalize as
\be\label{eq:OPDM}
\rho_{ij} = \<a^\dag_i a_j\> -\<a^\dag_i\>\<a_j\>\;,
\ee
where $a_i = \frac 12(X_i-iY_i)$ is the Fock annihilation operator for a hardcore boson at site $i$ and $\<\,\cdot\,\>$ represents expectation with respect to the system state. The last term is used to account for the uncorrelated dynamics in a particle-non-conserving system. This analysis of OPDM differs from previous works \cite{roushan2017spectroscopic,chiaro2022direct} as it uses discontinuity of the spectrum as an indication of crossover instead of extracting the Hamiltonian spectrum. This method provides a new tool in studying ergodicity violation in many-body systems.

Experimentally, we first prepare the system in an  antiferromagnetic many-body state, whose OPDM spectrum contains only eigenvalues one and zero, interpretable as the system  comprising natural orbitals that are either `occupied' or `empty'. We time-evolve the system and construct the OPDM from a logarithmic number of measurement combinations, where the measurement basis is randomly chosen as $X$ or $Y$ for each qubit \cite{elben2023randomized}. Owing to the limited connectivity in 1D, it is sufficient to restrict our attention to smaller qubit subchains to capture the significant correlations. 

In the left panel of Fig.~\ref{fig2:1dlocalization}\dr, we report the OPDM spectrum of 10 spins after 9 Floquet cycles, averaged over 100 disorder realizations and as a function of~$\theta$. For small~$\theta$, half the OPDM eigenstates, or natural orbitals, are nearly occupied and the other half are nearly empty---disorder has frozen the system's evolution. A conspicuous discontinuity (gap)~$\delta$ in the center of the spectrum persists despite the evolution under interactions---a robust signature of stable quasiparticles \cite{mahan2000many}. On the other hand, for large~$\theta$,  the discontinuity vanishes and the spectra look ergodic.

\begin{figure*}[t]
\setlength{\tabcolsep}{10pt}
\centering
\includegraphics[width=1\textwidth]{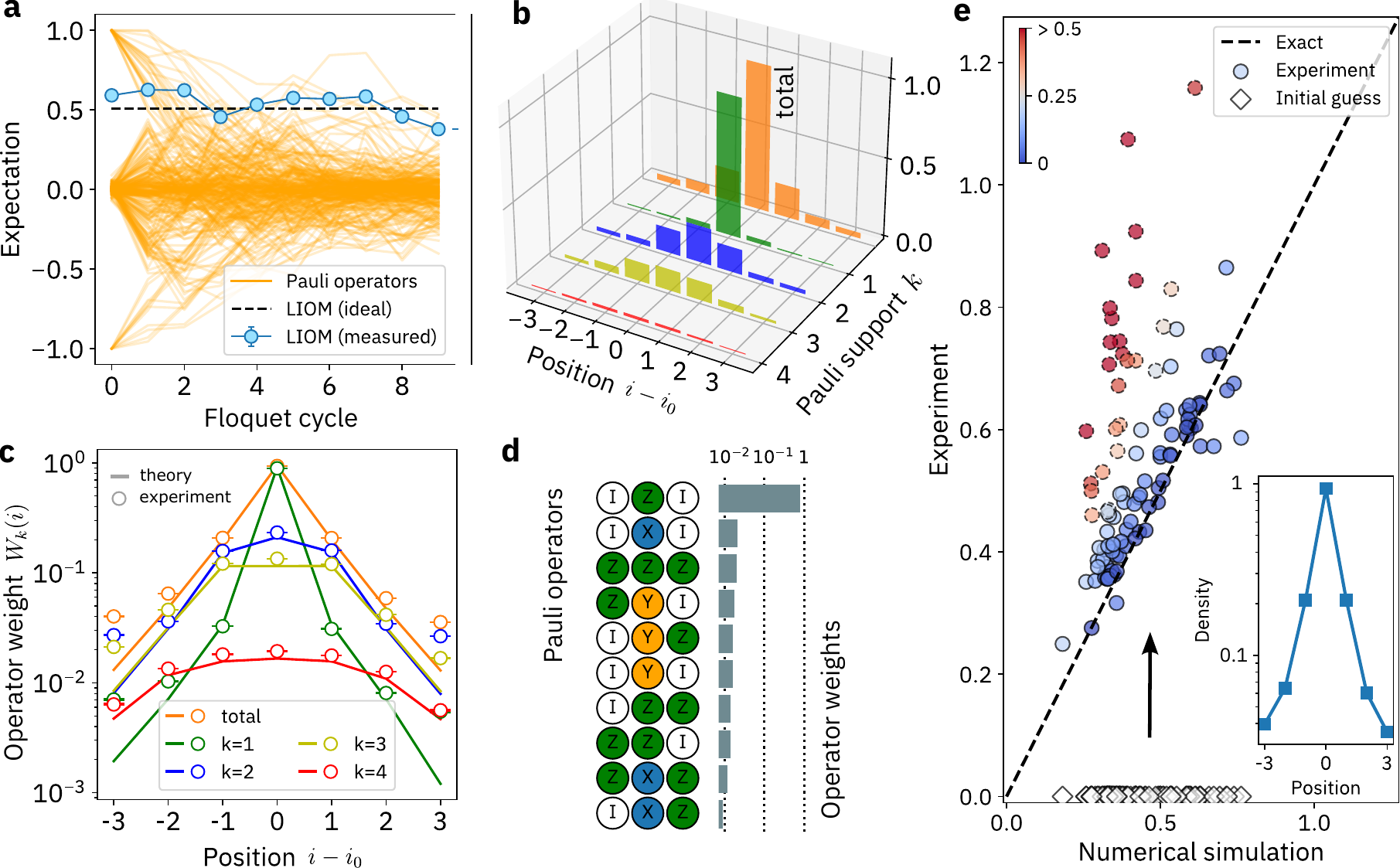}
\caption{\textbf{Uncovering local integral of motion (LIOM) operators in 1D.}
\textbf{\arc} 
Example measurements for a typical LIOM; for the initial state $\ket{000011}^{\otimes 17}\ket{00}$ and initial guess $L^{(0)} = Z_{50}$. The time-dependent expectation values of the numerous local Pauli operators composing the LIOM are displayed in orange, while the reconstructed expectations of the LIOMs are indicated by blue dots. A comparable ideal, noise-free LIOM, derived from exact diagonalization for a subsystem encompassing qubits 44 to 56, is represented by the dashed line. \textbf{\brc}
Experimental reconstruction of the average of LIOM operators in the bulk of a 104-qubit chain, extracted over 9 Floquet cycles, depicted over both space and Pauli support. Orange bars depict the total LIOM weights. Remaining bars show the individual contributions of Pauli operators with support one (green), two (blue), three (yellow), and four (red). 
\textbf{\crc} Same data (dots) compared against ideal, noiseless numerical simulation (log plot). The error bars represent bootstrapped error estimates.
\textbf{\drc} Average LIOM structure in terms of dominant Pauli operators depicted by average operator weights $w_\mu$ (log scale) on the right.
\textbf{\erc} The experimentally detected LIOM localization length compared with the same values obtained from noiseless numerical simulation. Each circle corresponds to a single LIOM and the position of the circle indicates
the experimental and simulation value of localization length. The color of experimental points indicates the relative error (see colorbar).
Dashed line is of slope one, representing ideal agreement between experiment and noise-free prediction. The alignment of the experimental points with this line, excluding outliers (circles with dashed borders), has a Pearson coefficient of 0.895. Diamonds indicate the case when LIOM is approximated by its initial guess (i.e., Pauli Z operators); localization lengths of the initial guesses (i.e. Pauli Z operators) are zero. The arrow highlights the methodological improvements achieved over this initial guess. Inset: average shape of LIOMs. }
\label{fig:1dLIOMs}
\end{figure*}

\begin{figure*}[t]
\setlength{\tabcolsep}{10pt}
\centering
\includegraphics[width=0.9\textwidth]{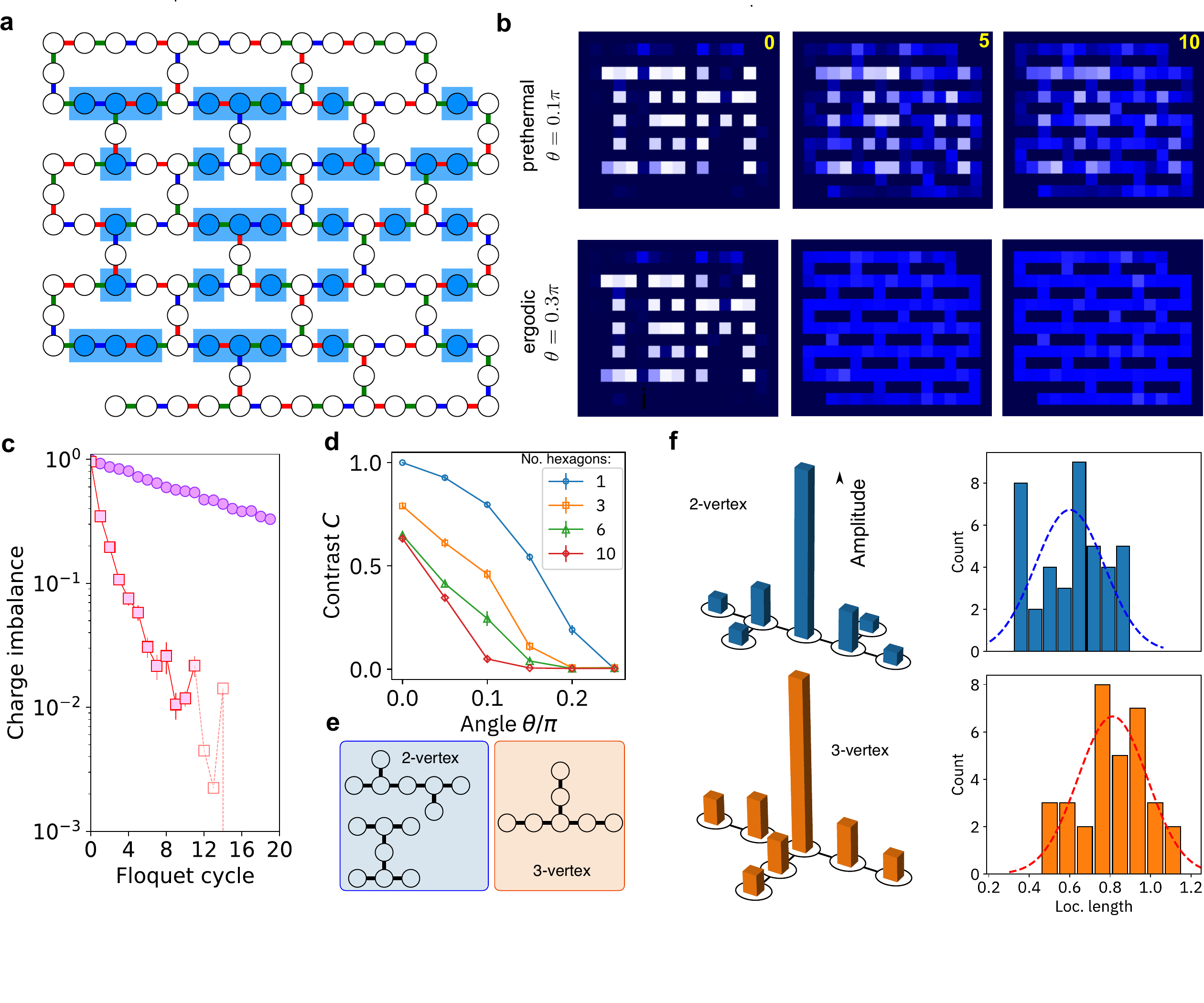}
\caption{\textbf{Benchmark signatures and uncovering LIOM operators in 2D.} 
\textbf{\arc} 
The 124 qubits (circles) of the heavy-hexagonal \texttt{ibm\_washington} processor used in the 2D experiments. Edge colors (blue, green, red) signify time order of two-qubit gates in a Floquet cycle. Arbitrary initial-excitation patterns can be encoded: blue fill for $|1\rangle$  and white for $|0\rangle$ initialization.
\textbf{\brc} 
Color-coded visualization of single-qubit excitations (white represents 1, dark blue 0) across the entire device for Floquet cycles $m=0$, 5, and 10, at two interactions: $\theta=0.1\pi$ (top) and $\theta=0.3\pi$ (bottom). Pattern diffusion at $\theta=0.1\pi$ suggests prethermalization.
\textbf{\crc} 
Spin imbalance for an initial antiferromagnetic pattern. Bootstrapped errors are depicted as vertical error bars positioned behind the markers. For $\theta=0.3\pi$ (red), signal is vanishing beyond cycle 10. 
For $\theta=0.1\pi$ (purple), gradual relaxation is observed without an evident plateau.
\textbf{\drc} OPDM gap contrast versus interaction strength~$\theta$ across system sizes measured in number of hexagon rings (see Supplementary Information ). Clear signature of MBL regime is not evident. Bootstrapped errors are depicted as vertical error bars positioned behind the markers.
\textbf{\erc} In 2D, LIOMs have two unique support topologies characterized by a central vertex with either two or three connections. Average LIOM operator weights in the bulk for each topology are shown (representing a total of 124 LIOMs).
\textbf{\frc} Imprecision histograms for all bulk two-vertex (top) and three-vertex (bottom) prethermal LIOMs. The shaded region indicates imprecision that exceeds that of the initial-guess operator.}
\label{fig:2dsystem}
\end{figure*}

To understand scaling with system size~$n$, in the right panel of Fig.~\ref{fig2:1dlocalization}\dr, we plot the order parameter~$\delta \times n$. 
With increasing system size, this quantity should theoretically diverge in the MBL regime but vanish in the ergodic regime, where thermalization leads to a featureless `infinite-temperature' state, and a cross-over between the two should be present. Experimentally, we observe behaviour that is consistent with the critical angle~$\theta_c \approx 0.16\pi$ obtained from the exact diagonalization.

The spectral discontinuity $\delta$ is sensitive to experimental noise, which could also close it. As a more noise-robust order parameter, we introduce the gap contrast~$C \coloneqq 1-\rho_c/\rho_p$, where $\rho_c$ is the density of orbitals (i.e. average number of orbitals per range of occupation values) at the center of the spectrum and $\rho_p$ is the value at the density peaks (see Fig.~\ref{fig2:1dlocalization}\er). We introduce this quantity here to observe the spectral signatures of the MBL even when the noise effects transform the spectral gap into a `soft gap', where the density of states is low but non-vanishing. The contrast~$C$ is maximized for noiseless MBL dynamics ($C=1$) and vanishes for ergodic dynamics ($C=0$). 
Our experimental observations in Fig.~\ref{fig2:1dlocalization}\er~ align with the expected behavior. For systems larger than those presented in this study, the restricted propagation of information in one dimension prevents full OPDM equilibration and, consequently, undermines the identification of the crossover position.\\

\textbf{Local integrals of motion (LIOMs) in 1D.} To demonstrate the capability of quantum hardware for studying disordered quantum systems, we develop an experimental protocol to construct a set of LIOMs~$L_1,L_2\dots,L_n$ from a system's time dynamics (recall Fig.~\ref{fig:schematics}). 
Here, we define a LIOM $L$ as a Hermitian operator that commutes with the Floquet unitary $[L,U_F] = 0$. To find each LIOM, we consider its decomposition $L = \sum_\mu a_\mu P_\mu$ over multi-qubit Pauli operators~$P_\mu$ indexed by $\mu\in[1,4^{n}]$, with real-valued weights $a_\mu$.
To find these weights, we use the fact that average inverse-time evolution of an arbitrary initial operator~$L^{(0)}$ is an exact LIOM~\cite{mierzejewski2015identifying,chandran2015constructing}. Thus, its Pauli weights can be estimated as
\begin{equation}\label{eq:fourier_transform}
a_\mu \simeq \frac{1}{2^n}\frac{1}{D+1}\sum_{d=0}^{D} \mathrm{Tr}\left[ U^{d}_F L^{(0)} U^{d\dag}_F P_\mu\right]\;,
\end{equation}
where $D \gg 1$ represents the number of Floquet cycles, and $\simeq$ denotes equality in the asymptotic limit $D \to \infty$. The optimal choice of depth $D$ is device-dependent and determined by the trade-off between noise and the finite sum error (see Supplementary Information~I.H).
 Estimating each $a_\mu$ on a quantum computer is scalable with a constant number of Floquet steps and measurements for a specific error target. Deriving the complete LIOMs, comprising the set of non-trivial $a_\mu$ coefficients up to an arbitrarily small constant error in operator norm, is more resource-intensive but remains independent of system size. Due to the local structure of LIOMs, identifying only the local Pauli components derived from local measurements offers a good approximation of the operators. It is worth noting that one may derive non-local properties of the system by further processing multiple LIOMs, e.g. constructing the eigenstates \cite{chandran2015spectral}.

We experimentally estimate~$a_\mu$ by measuring the time evolution of Pauli expectation values~$\langle P_\mu(d) \rangle$ \cite{elben2023randomized}; example data depicted in  Fig.~\ref{fig:1dLIOMs}\ar. To reduce the sampling overhead, we employ the limited size of lightcone that allows us to significantly reduce the number of samples in initial states and measurements, see Supplementary Information Section II.B. To lessen noise effects, this experiment is run to~$D = 10$ cycles.  
For each qubit~$i$, we select $L^{(0)}_i=Z_i$, the on-site $Z$ operator. We choose the parameter $\theta = 0.1\pi$ for several reasons. In this regime, the differences in coefficients for Pauli operators between the time average and the initial guess are large enough to be measured without significant error. Also, since the required averaging depth grows as $\propto\theta^{-1}$, the chosen value ensures that the noise error is comparable to the averaging error (see Supplementary Section I.H).  As depicted for $i=50$ in Fig.~\ref{fig:1dLIOMs}\ar, we find a combination of the numerous expectation values that yields an approximate constant of motion (blue dots) or LIOM. As the data show, the apparent chaos of expectation values hides order. 

To understand the effect of noise and partially validate the LIOM, we find the exact LIOM in the absence of noise for a 10-qubit subset of the 1D chain. The dashed line reports the noise-free LIOM value, aligning with the data up to some small variance, which we will shortly use to quantify the precision of the extracted LIOMs.

We next extract 104 different LIOMs for the same 1D chain with fixed disorder. One should expect that LIOMs must be linear combinations of Pauli operators with weights that decay exponentially with the distance from the center and the size of the Pauli operators. To illuminate this structure, we evaluate $k$-local qubit weights~$W_{k}(i) \coloneqq \sqrt{\sum_\mu a_\mu^2 \delta(i,k,\mu)}$, where $\delta(i,k,\mu)$ is equal to $1/k$ if~$P_\mu$ acts non-trivially on~$k$ qubits necessarily including qubit~$i$; otherwise, $\delta(i,k,\mu)=0$. Figure~\ref{fig:1dLIOMs}\br~ shows the average weights $W_{k}(i-i_0)$ for bulk LIOMs generated from Pauli-Z operators on qubits $i_0\in[6,97]$, along with the total weights~$W(i) = \sqrt{\sum_k W^2_k(i)}$.  The data clearly show the striking exponential localization of the LIOMs in both distance $|i-i_0|$ from their centers~$i_0$ and higher-weight operator locality~$k$.

How good is the average reconstruction? In Fig.~\ref{fig:1dLIOMs}\crf, we compare the experimental data (dots) with noise-free simulations (solid lines) obtained from brute-force simulations of a subsystem within the physical lightcone. 
The two agree notably across several orders of magnitude.

To gain further insight, we also evaluate an average LIOM characterized by normalized weights~$w_\mu = \operatorname{\mathbb E} [a^2_{\mu}]$ averaged over the bulk LIOMs when we align their centers.
We depict dominant Pauli contributions in Fig.~\ref{fig:1dLIOMs}\dr, which evidently  originate primarily from a few local Pauli operators. The central Pauli-Z dominates  with $w_Z = 0.71$---an observation in alignment with the robustness of the spin imbalance of the antiferromagnetic order observed in the benchmark experiments in Fig.~\ref{fig2:1dlocalization}\ar. Additionally, the central Paulis $X$ and $Y$ are also partially conserved, albeit with significantly lower average residual expectations of $w_X = 0.021$ and $w_Y = 0.016$.

Figure~\ref{fig:1dLIOMs} demonstrates the impact of experimental noise on the extracted localization by contrasting the localization lengths of LIOM operators in noisy experiments against noise-free simulations. The figure shows a generally mild broadening in experimental LIOMs, with a few notable outliers. Further precision analysis of these LIOMs is detailed in the Supplementary Materials II.D.\\

\textbf{Benchmarks in 2D.} While the experimental protocols we presented in 1D straightforwardly generalize to 2D (see processor topology in Fig.~\ref{fig:2dsystem}\ar), the dynamics are far more complex. Owing to the higher connectivity, information spreads faster in the system, which allows all-to-all communication in our experiments achieved in 15 Floquet cycles for the furthest pairs.

We prepare the spin polarization pattern depicted in Fig.~\ref{fig:2dsystem}\ar~and record its time evolution (see Fig.~\ref{fig:2dsystem}\br). Akin to the 1D case, for~$\theta=0.3\pi$, the pattern rapidly scrambles, suggesting system thermalization. In contrast, for~$\theta=0.1\pi$, the pattern retains a higher level of distinguishability, pointing to slower, prethermal dynamics. Using smaller values of $\theta$ would result in slower dynamics for the same noise rate, leading to significantly higher errors, and is therefore not done in this study.
Fig.~\ref{fig:2dsystem}\crf~shows the corresponding spin imbalance dynamics for an antiferromagnetic initial state  (see Supplementary Information II.B).
The imbalance decays faster than in 1D, and for $\theta=0.1\pi$ (top trace), no plateau associated with MBL is evident within the observation timescale. Nevertheless, the slow relaxation timescale, which is about six times slower than in the ergodic regime, allows us to treat it as a prethermal regime. 

In Fig.~\ref{fig:2dsystem}\dr, we report the OPDM gap contrast ratio~$C$, a signature of the MBL anomaly, with system size in number of 2D hexagons.  We observe strong system-size effects at the few-hexagon level, a clear closing of the soft gap for larger~$\theta$ and some reduction for smaller~$\theta$. 
Surprisingly, a clear signature of MBL is not evident.
Noise has a more significant effect on~$C$ in 2D, hence one should interpret the results with care. The errors of LIOMs in 2D are shown in Supplementary Information II.E.

\textbf{LIOMs in 2D.}
The LIOM extraction protocol generalizes straightforwardly to different system sizes and levels of connectivity. However, in the 2D bulk, possible pre-thermal LIOMs can exhibit two distinct support patterns centered around vertices with either 2 or 3 connections. Fig.~\ref{fig:2dsystem}\er~ illustrates the average extracted LIOM in the bulk, represented by the average operator weight~$W(i)$ plotted over spatial sites~$i$, with $i=0$ denoting a LIOM center. These LIOMs display geometric symmetries, and as anticipated, the two-vertex LIOMs exhibit shorter localization lengths.

\textbf{Finite-time analysis.}
In the computationally universal system we analyze, the observables we identify by finite-time analysis, while precise, are not guaranteed to be conserved indefinitely \cite{shiraishi2021undecidability}. Therefore, some caution is advised in interpreting them as indicators of the system's ultimate integrability. It is possible that the system might exhibit near-integrable behavior temporarily before ultimately transitioning to chaos. 
In a future experiment, increasing~$D$ [Eq.~\eqref{eq:fourier_transform}] and observing  the presence of ``fat'' tails or a breakdown in the LIOM convergence can serve as a  signature of an MBL instability.

\textbf{Discussion}. In conclusion, our protocol effectively identifies conserved operators in large-scale disordered many-body systems, showing scalability independent of system size and notable noise resilience, and can be used to readily extract key physical features such as localization lengths. 
Validating our protocol through control experiments in a simulable 1D regime enhances its credibility and underscores progress in addressing device noise for our 2D results.
Notably, we detect indications of quasi-integrability at short depths in 2D, along with the presence of a complete set of prethermal LIOMs.

While the current work may not directly tackle the most pressing physics questions surrounding the MBL phase, future advances in error mitigation and instrument performance \cite{Kim2023utility,liao2023machine,majumdar2023best} may enable the use of LIOM measurements to deepen our understanding of fundamental phenomena. These include disordered dynamics in higher dimensions \cite{choi2016exploring}, avalanche instability \cite{morningstar2022avalanches,sels2022avalanches,wdroeck2017stability}, and topological MBL \cite{wahl2020}. Our experiment suggests that future research should be able to explore such fundamental questions, shedding light on the physics of many-body thermalization and the stability of non-equilibrium quantum phases through the use of quantum computers. An interesting technical  extension of this work would be, instead of accessing the mathematical structure of LIOMs, to construct an oracle that implements LIOMs as operators. The simplest such oracle would be a direct implementation of the operation in Eq.~\eqref{eq:fourier_transform} by applying a random number of unitary Floquet layers, a local operator, and the inverse of the unitary evolution.

\textbf{Data availability} The figure data generated in this study along with the indivdual Pauli expectation values have been deposited in the Figshare database \cite{figshare_data}.

\textbf{Acknowledgements}. We thank Abhinav Deshpande, Oliver Dial,  Andrew Eddins, Daniel Egger, Bryce Fuller, Jim Garrison, Dominik Hahn, Abhinav Kandala, Will Kirby, David Layden, Haoran Liao, David Luitz, Swarnadeep Majumder, Antonio Mezzacapo, Toma\v{z} Prosen, James Raftery, Dries Sels, Kristan Temme, Maurits Tepaske, and Ken Wei for valuable discussions. We also thank the entire IBM Quantum team for developing and supporting the infrastructure necessary to run these large-scale quantum computations. We acknowledge the usage of the Qiskit Prototype for zero-noise extrapolation. O.S., A.S., and Z.M acknowledge support by grant NSF PHY-2309135 to the Kavli Institute for Theoretical Physics (KITP).

\textbf{Author Contributions.} O.S. designed the experiment and developed the theoretical calculations and numerical simulations. D.W., H.Z., and N.H. developed the application software and performed the experiment. Z.M. helped to guide and interpret the experiment. A.S. and R.M. contributed to the theoretical analysis. The manuscript was written by O.S., D.W., and Z.M. All authors provided suggestions for the experiment, discussed the results, and contributed to the manuscript.

\let\oldaddcontentsline\addcontentsline
\renewcommand{\addcontentsline}[3]{}
\bibliography{references}

\let\addcontentsline\oldaddcontentsline

\clearpage
\pagebreak

\setcounter{page}{1}
\setcounter{equation}{0}
\setcounter{figure}{0}
\renewcommand{\theequation}{S.\arabic{equation}}
\renewcommand{\thefigure}{S\arabic{figure}}
\renewcommand*{\thepage}{S\arabic{page}}

\onecolumngrid

\begin{center}
{\large \textbf{Supplementary Materials for \\ ``\ourtitle"}}\\
\vspace{0.5cm}
Oles Shtanko$^1$, Derek S. Wang$^2$, Haimeng Zhang$^{2,3}$,\\
Nikhil~Harle$^{2,4}$, Alireza Seif$^2$, Ramis Movassagh$^{5}$, and Zlatko Minev$^2$\\
\vspace{0.25cm}
\textit{
$^1$\ibmalmaden\\
$^2$\ibmyorktown\\
$^3$\deeusc\\
$^4$\yale\\
$^5$\ibmcambridge
}
\end{center}

\tableofcontents

\vspace{1cm}

\twocolumngrid

\section{Theoretical analysis}

In this section, we provide a brief overview of our circuit model and demonstrate its ability to exhibit many-body localization (MBL). First, we outline the main features of the circuit model under study. We then present numerical observations of the MBL regime in one-dimensional circuits and study its dependence on circuit parameters. Our discussion then delves into a detailed analysis of the methods used to study MBL-type dynamics, including spin imbalance and one-particle density matrix. We conclude with specific details regarding the detection of local integrals of motion.

\subsection{The model}

This work presents an experimental study of a circuit model that exhibits the many-body localization (MBL) regime \cite{morningstar2022avalanches} (which, potentially, corresponds to the MBL phase in the thermodynamic limit \cite{basko2006metal,pal2010mbl}) and its transition to the ergodic regime. 
Circuit dynamics are generated by applying the same pattern of gates to each cycle. The output of the $d$th cycle is described by the state
\be
|\psi_d\rangle = U_F^d|\psi_0\rangle,
\ee
where $|\psi_0\rangle$ is the initial state and $U_F$ is the Floquet unitary \cite{zhang2016floquet,ponte2015floquetmbl} of a cycle. We consider the Floquet unitary composed of a number of layers,
\be\label{eqs:floquet_decomposition}
U_F = \prod _kU^{(k)}_F,
\ee
where $U_F^{(k)}$ is the layer unitary of the form
\be\label{eqs:floquet_layer_decomposition}
\begin{split}
 &U_F^{(k)} =  \prod_{i=1}^n P_i(\varphi_i) \prod_{(i,j)\in C_k} G_{ij}(\theta),\\
 &G_{ij}(\theta) = U_{i}(\theta)U_j(\theta)C^Z_{ij},
 \end{split}
\ee
where $P_i(\varphi_i)$ and $U_i(\theta)$ are single-qubit gates in Eq.~\eqref{eq:gates} in the main text applied to qubit $i$, $G_i(\theta)$ is a composite two-qubit gate, and $C^Z_{ij}$ is the controlled-Z gate applied to qubits $i$ and $j$. Here, $C_k$ is a list of qubit pairs undergoing a two-qubit unitary gate in layer $k$.

We first consider a one-dimensional brickwork circuit, where each Floquet cycle consists of two layers (see Fig.~\ref{figs:circuit_layouts}a). The first layer, represented as $C_1$, contains gates acting on qubit pairs of the form $(2i, 2i+1)$, where $i$ ranges from $0$ to $\lfloor n/2\rfloor-1$. The second layer, symbolized by $C_2$, contains gates acting on pairs of the form $(2i+1, 2i+2)$, where $i$ ranges from $0$ to $\lceil n/2\rceil-2$. In contrast, each Floquet cycle of a two-dimensional circuit on a heavy-hexagonal lattice consists of three unique layers. The qubit pairs involved in each layer are indicated by different colors in Fig.~\ref{fig:2dsystem}a, as also illustrated in Fig.~\ref{figs:circuit_layouts}b. A crucial point to note is that each layer is completed by successive disorder gates. The random angle $\phi_i\in[-\pi,\pi]$ of each disorder gate acting on the qubit $i$ remains constant over all cycles and layers throughout the experiment. This scenario is similar to static disorder in quantum spin chains and differs from dynamic disorder, which is essentially noise.

\begin{figure}[t!]
    \centering\includegraphics[width=0.45\textwidth]{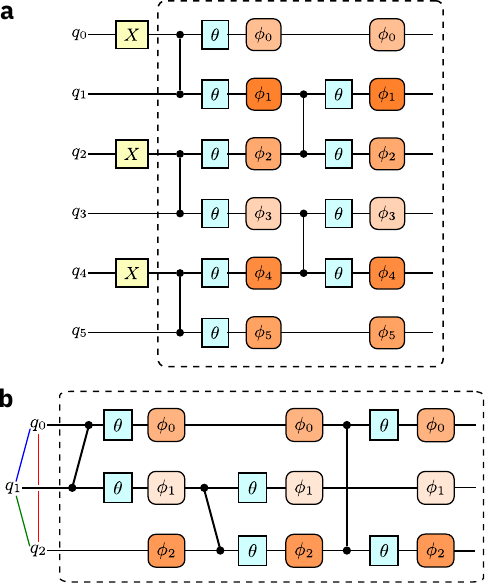}
    \caption{ 
    \textbf{Circuit model of Floquet dynamics}. 
    \textbf{a.} Gate schematic of initialization and one cycle of our Floquet model for an example of one-dimensional, 6-qubit lattice. 
    All qubits are  initialized to the target initial state using single-qubit  $X$ gates applied to the initial all-zero state. The boxed area represents a single Floquet cycle, which is iterated to advance the dynamics of the system.
    \textbf{b.} 
    Illustration of a single Floquet cycle in a two-dimensional lattice, showcasing three interconnected qubits. The colors of the connections on the left represent the sequential application of the two-qubit gates over their connectivity (i.e., graph coloring). Each layer consists  two-qubit $CZ$ gates (depicted as black dots connected by lines), $U(\theta)$ gates shown as green boxes, and disorder gates $P(\phi_i)$ depicted in orange.
    Visualizations were generated using \textit{Quantikz}.}
    \label{figs:circuit_layouts}
\end{figure}

We always initiate the qubits in a product state in the computational basis. Since each product state is a superposition of many eigenstates of the unitary $U_F$, this circuit leads to far-from-equilibrium dynamics. The nature of this dynamics depends on the values of the angle $\theta$. As we will show numerically in the following section, in one dimension there exists a critical value of $\theta_c$ such that for $\theta<\theta_c$ the spectrum of the unitary $U_F$ behaves similarly to a set of random phases, while the eigenstates are low-entangled, signaling the presence of many-body localization (MBL). In contrast, for $\theta>\theta_c$ we observe signatures of chaotic dynamics, including a spectrum resembling a random matrix and high eigenstate entanglement.

More details about the circuit are in the Section~\ref{sec:implementation}, including the implementation using native gates, simultaneous measurement of commuting observables, and error mitigation.

 \begin{figure*}[t!]
    \centering
    \includegraphics[width=0.9
    \textwidth]{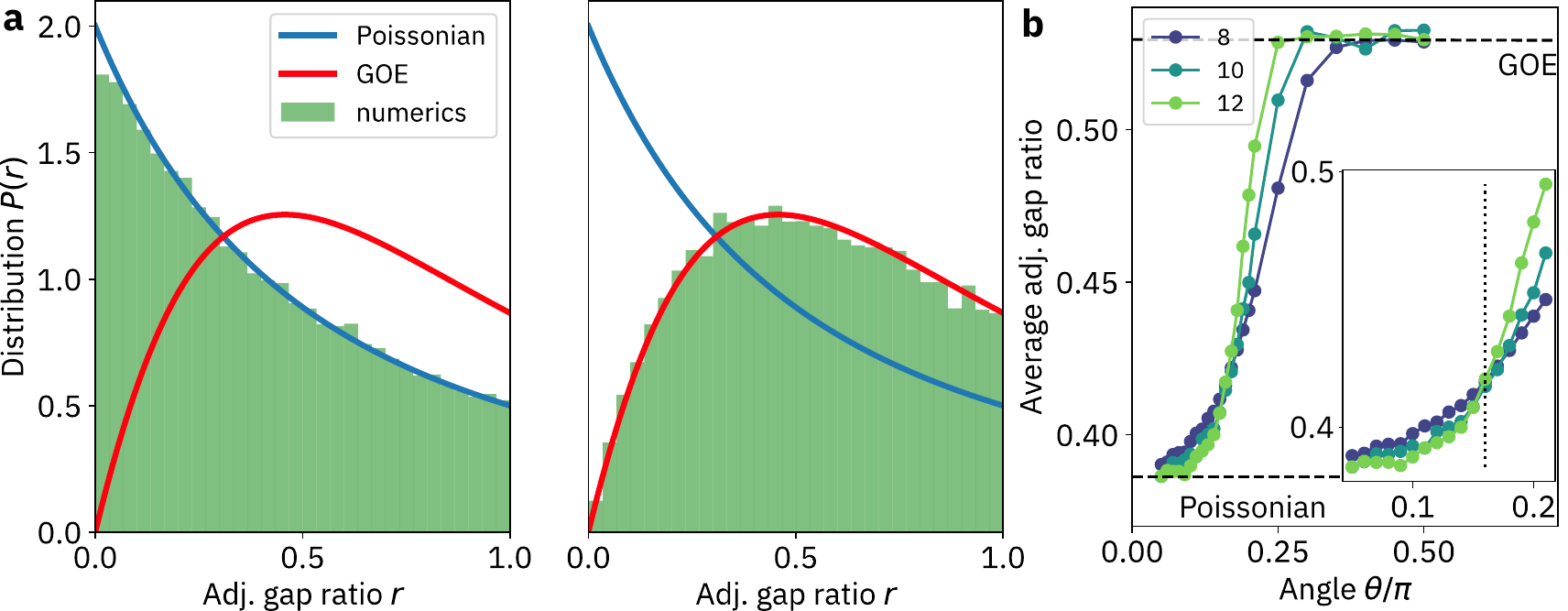}
    \caption{
\textbf{Quasienergy level statistics for the one-dimensional (1D) circuits.} \textbf{a.} 
The adjacent gap ratio distribution [see Eq.~\eqref{eqs:agr}] is depicted for the Floquet circuit at two different interaction $\theta$: $0.1\pi$ (left) and~$0.3\pi$ (right). 
The left plot corresponds to the many-body localized (MBL) dynamical regime, while the right plot corresponds to the ergodic regime. 
The green-shaded bars display the distribution obtained over $10^3$ random disorder realizations of  one-dimensional Floquet circuits with 12 qubits.
The solid blue and red curves represent fits of the numerically-obtained distributions using Poissonian and  Gaussian Orthogonal Ensemble (GOE) statistics, respectively. 
\textbf{b.} 
The transition from GOE to Poissonian statistics for different system sizes as shown by the average adjacent gap ratio for circuits containing from 8 to 12 qubits. The average is taken over $10^3$ individual circuits with different disorder realizations. The inset focuses on a crossover at $\theta \approx 0.16 \pi$ (indicated by a vertical dotted line).}
    \label{figs:levelstat}
 \end{figure*}
   \begin{figure*}[]\centering\includegraphics[width=0.63\textwidth]{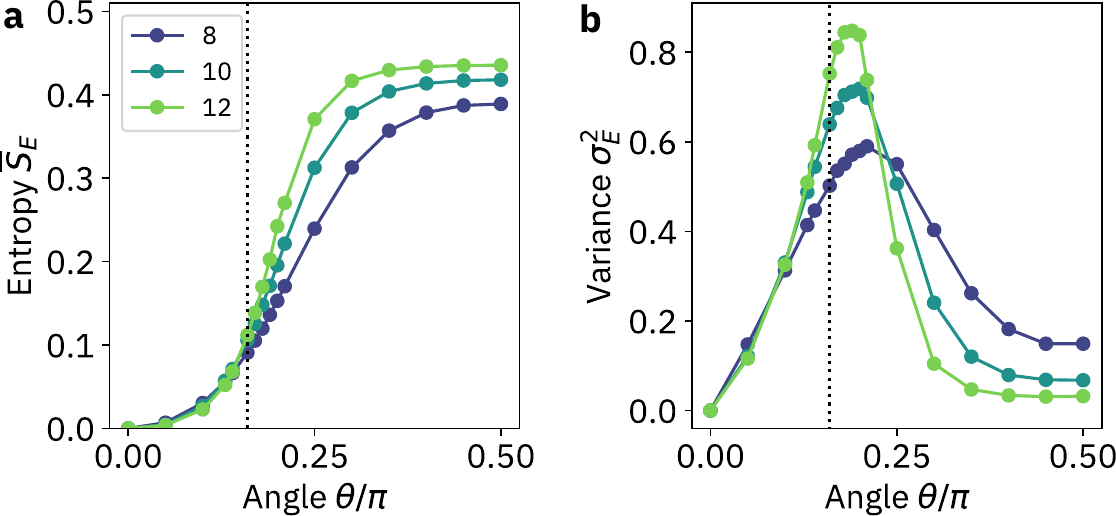}
    \caption{
      \textbf{Entanglement entropy for the one-dimensional (1D) circuits.}
      \textbf{a.} The average entanglement entropy $\bar S_E$ of eigenstates per qubit across varying system sizes, as obtained from $10^3$ disorder realizations for 8-qubit circuits and $10^2$ realizations for 10- and 12-qubit circuits. The system demonstrates a crossover at $\theta\approx0.16\pi$. \textbf{b.} The variance of entanglement entropy $\sigma_E^2$ among different eigenstates. This function has a pronounced peak that becomes sharper and converges to the critical point as system size increases.
    }
    \label{figs:enetanglement_entropy}
 \end{figure*}

\begin{figure*}[t!]
\centering
    \includegraphics[width=0.85\textwidth]{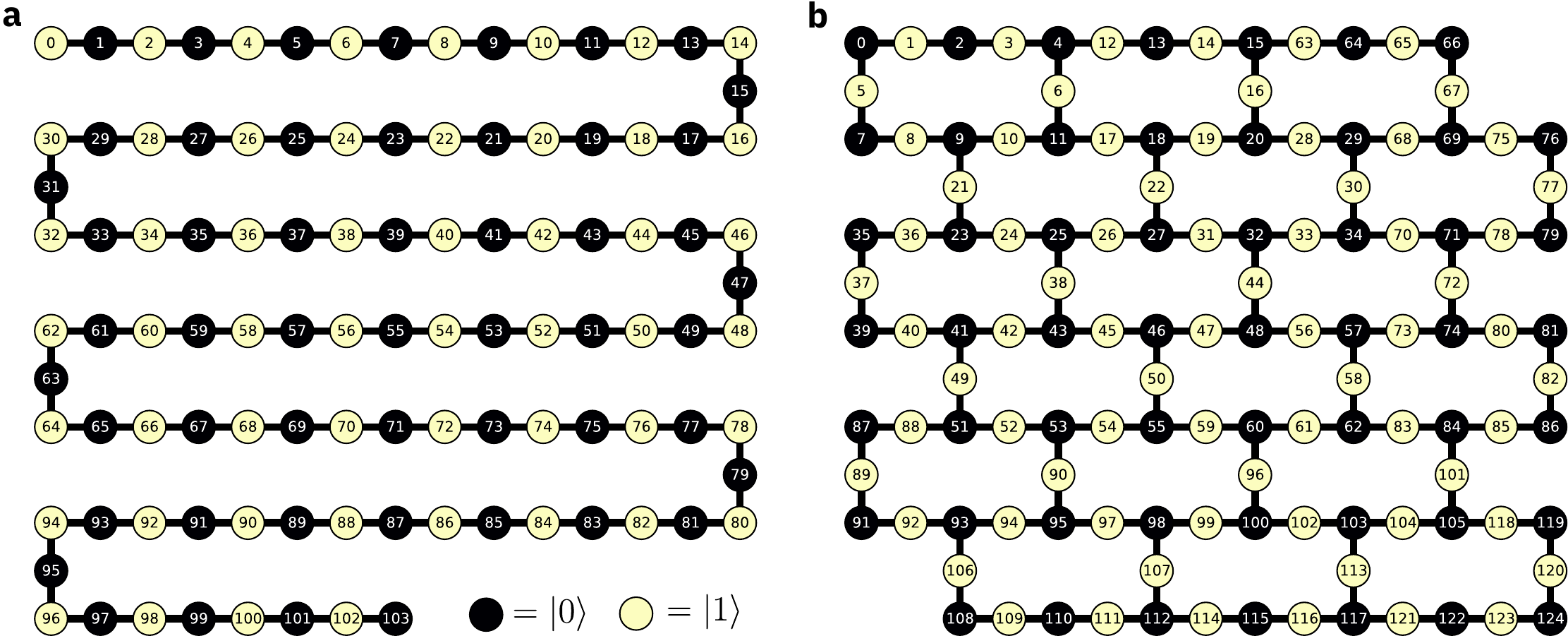}
\caption{
      \textbf{Antiferromagnetic order of the initial states.}
      The illustration displays the configuration for the initial states exhibiting antiferromagnetic order. 
      Black circles represent qubits initialized in $|0\rangle$, while yellow those initialized in~$|1\rangle$. 
      \textbf{a.} Configuration used for the 1D chain comprising 104 qubits. 
      \textbf{b.} Configuration of the 2D heavy-hexagon lattice, comprising 125 qubits.
    }
    \label{figs:cdw_state}
 \end{figure*}
 
\subsection{Evidence of MBL regime}
\label{mbl_evidence}

Here, we analyze the proposed model and demonstrate the existence of the MBL regime through numerical simulations performed on systems with 8 to 12 qubits. 

A crucial characteristic of the MBL phase can be identified in the spectrum of the Floquet unitary. To show this, we examine the spectral decomposition
\be
U_F|\alpha\rangle = \exp(i\lambda_\alpha)|\alpha\rangle,
\ee
where $\lambda_\alpha \in [-\pi,\pi]$ are quasienergies and $|\alpha\rangle$ are corresponding eigenstates. Assuming that $\lambda_\alpha$ are set in increasing order, we define the spectral gaps as
$
\delta \lambda_\alpha = \lambda_{\alpha+1}-\lambda_\alpha.
$
The main object we are interested in is adjacent gap ratios \cite{pal2010mbl,zhang2016floquet} that can be defined as
\be\label{eqs:agr}
r_\alpha = \frac{\min(\delta\lambda_\alpha,\delta\lambda_{\alpha+1})}{\max(\delta\lambda_\alpha,\delta\lambda_{\alpha+1})}.
\ee
The statistical distribution of $r_\alpha$ in the large system limit, denoted by $P(r_\alpha)$, has different values in the ergodic and MBL phases. In particular, the MBL phase has Poisson level statistics, while the ergodic phase has the statistics of the Gaussian orthogonal ensemble (GOE) of matrices:
\be
\begin{split}
&P(r) = \frac{2}{(1+r)^2} \quad {\rm (Poisson)},\\ &P(r) = \frac{27}{4}\frac{r+r^2}{(1+r+r^2)^{5/2}} \quad {\rm (GOE)}.
\end{split}
\ee
The difference between these distributions can be used to distinguish the two regimes. The associated order parameter is the mean spectral gap ratio $\overline{r} = 2^{-n}\sum_\alpha r_\alpha$. In the MBL phase, we expect this mean to reach the Poisson value $\overline{r} \approx 0.39$, while the GOE distribution for the ergodic phase has $\overline{r}\approx 0.53$ \cite{pal2010mbl}.

The spectral properties of our circuit are shown in Fig.~\ref{figs:levelstat}a. Here we see that for $\theta = 0.1\pi$ the level statistics almost perfectly follow the Poisson distribution. In contrast, for $\theta = 0.3\pi$ it changes to the GOE distribution. This transition is illustrated for several circuit sizes in Fig.~\ref{figs:levelstat}b. The curves show a strong crossover at $\theta_c\approx 0.16\pi$, indicating the presence of a transition between different level statistics. While the fate of this transition in the thermodynamic limit remains an open question \cite{wdroeck2017stability,morningstar2022avalanches,sels2022avalanches}, this critical value clearly distinguishes two dynamical regimes for small systems, which we refer to below as the ``MBL regime'' and the ``ergodic regime''.

Another measure of localization is the bipartite entanglement entropy of the eigenstates of the Floquet unitary. To define it, we divide the system into two equal parts and label them as subsystem $A$ and subsystem $B$. Then we can represent each eigenstate $|\alpha\>$ using Schmidt decomposition
\be
|\alpha\> = \sum_\mu \eta_{\alpha\mu} |\psi^A_{\mu\alpha}\>\otimes|\psi^B_{\mu\alpha}\>,
\ee
where $|\psi^A_{\mu\alpha}\>$ and $|\psi^B_{\mu\alpha}\>$ are some basis states of the subsystems $A$ and $B$, respectively. The entropy of bipartite entanglement is defined as
\be
S_\alpha := -\sum_{\mu} \eta^2_{\alpha\mu} \log_2\eta^2_{\alpha\mu}.
\ee
As an associated order parameter we use the average entanglement entropy per qubit,
\be
\overline S_E := \frac{1}{2^nn}\sum_\alpha S_\alpha.
\ee
For the MBL phase, we expect the entanglement entropy to exhibit an area law, i.e. it remains constant independent of the number of qubits $n$ \cite{pal2010mbl}. This means that in the MBL phase it has scaling $\overline S_E = O(n^{-1})$. In contrast, in the ergodic phase Floquet dynamics thermalizes the subsystems to local infinite temperature. In this state the entropy saturates at the Page value,
\be
\overline S_E  \simeq \frac{1}{2}-\frac{1}{2n\ln 2 }+O\left(2^{-n}\right).
\ee
Thus it saturates to a constant in the large system limit.
Another test of the MBL phase transition is the variance of the entanglement distribution over eigenstates, i.e.
\be
\sigma^2_E := \frac{1}{2^n}\sum_\alpha(S_\alpha-\overline S)^2.
\ee 
The maximum of the variance $\sigma_S$ indicates the point of the phase transition for $n\gg1$.

The entanglement entropy for the Floquet unitary is depicted in Fig.~\ref{figs:enetanglement_entropy}. Similar to the level statistics, the entropy per qubit exhibits distinct behaviors on opposite sides of the critical point $\theta_c\approx 0.16\pi$ as the system size increases. Notably, in the MBL regime, its value displays a gradual decline with increasing of the system size, whereas in the ergodic regime, it ascends towards the Page value. The entropy variance also has a pronounced peak near the transition point, which lends further support to the existence of an MBL regime and its transition into ergodic behavior in our system.

We do not present the evaluation of level statistics or entanglement entropy for two-dimensional systems as it technically challenging due to the exponential overhead in classical complexity. Consequently, it remains uncertain whether the regime we observe for low $\theta$ values exhibits characteristic features of MBL. We nonetheless refer to it as the ``prethermal regime" for simplicity, predicated on the assumption that localization signatures should persist in small two-dimensional systems at shallow depths. Our study does, however, highlight a significant disparity between the dynamics of one- and two-dimensional systems when viewed from the lens of the spectrum of the one-particle density matrix, as depicted in Fig.~\ref{fig2:1dlocalization}d and Fig.~\ref{fig:2dsystem}d (definition provided below). It is worth noting that this observed effect could potentially be a consequence of noise -- a hypothesis that can be confirmed in future studies with higher-fidelity quantum hardware.
 
\subsection{Spin imbalance}
\label{CDW_section}

Unlike level statistics or entanglement entropy, the spin imbalance -- defined in the following paragraph -- is a localization measure that can be efficiently determined on real hardware without the exponential sampling overhead. Its detection involves the preparation of a structured initial state and the measurement of a certain parameter that describes its local order. In the MBL regime, the order parameter is expected to never decrease below a certain value, since the local memory of the initial state persists over time. Conversely, in the ergodic phase, the local order dissipates rapidly, rendering the initial state irretrievable.

A commonly used ordered state is the antiferromagnetic ordered state (previous works also use the terms ``antiferromagnetic" or ``N\'{e}el state"). Its layout in one dimension is illustrated in Fig.~\ref{figs:cdw_state}a. To define the order parameter, we group all qubits into set $S_0$, which contains all qubits initialized in state $|0\>$, and set $S_1$ containing the qubits initialized in state $|1\>$. Then we define the spin imbalance at Floquet cycle $d$ as
\be\label{eqs:imbalance}
\mathcal I_d = \frac{n_{d1}-n_{d0}}{n_{d1}+n_{d0}},
\ee
where the average population $n_{dk}$ of the qubits in the set $S_k$ with size $N_k$ takes the form
\be
n_{dk} = \frac 12-\frac1{2N_k}\sum_{i\in S_k} \<\psi_d|Z_i|\psi_d\>, \quad |\psi_d\> = U_F^d|\psi_0\>.
\ee 
In two dimension
s, the choice of the ordered state has more freedom. We choose the antiferromagnetic state shown in Fig.~\ref{figs:cdw_state}b, where the nearest neighbors are always in different states. Note, however, that this configuration has a larger number $N_1$ of $|1\>$ states than the number $N_0$ of $|0\>$ states. 

The spin imbalance renormalized to compensate for the effect of noise is shown in Figs.~\ref{fig2:1dlocalization}b and \ref{fig:2dsystem}c. The details of the renormalization and further experimental data are given in Sec.~\ref{more_data} below.

\subsection{One-particle density matrices (OPDMs)}

Here, we study the one-particle density matrix (OPDM), an $n\times n$ matrix with its elements defined for an arbitrary multi-qubit state $|\psi\>$ as
\be\label{eq:opdm}
\rho_{ij}= \<\psi|a_i^\dag a_j|\psi\> - \<\psi|a_i^\dag|\psi\>\<\psi|a_j|\psi\>,
\ee
where we denote the Fock operator for hardcore bosons $a_i = \frac{1}{2}(X_i-iY_i)$.

We can study the spectrum of the OPDM through the eigenproblem
\be
\rho|\varphi_k\> = \nu_k|\varphi_k\>,
\ee
where $|\varphi_k\>$ are natural orbitals and $\nu_k$ are ordered occupation numbers, $\nu_{k+1}\geq \nu_k$. In the case of a Gaussian state of fermions or hardcore bosons, the OPDM offers a comprehensive description of the system. The state of the system can be seen as a collection of one-particle orbitals $|\varphi_k\>$ occupied with probabilities $\nu_k\in\{0,1\}$. For other many-body states, this description extends to a probabilistic mixture of quasiparticle states occupying orbitals with probabilities $\nu_k$ \cite{bera2015opdm,bera2017opdm}. The conservation of occupation numbers serves as a robust indicator of the stability of quasiparticles \cite{mahan2000many}.

We evaluate OPDM for late-time state $|\psi_D\>$ with the initial state of the system $|\psi_0\>$ set to the CDW state described in the previous section. As an order parameter, we use the discontinuity parameter
\be\label{eq:discontinuity}
\delta = \left\<\nu_{N_0+1}-\nu_{N_0}\right\>_{\rm dis},
\ee
where $\<\dots\>_{\rm dis}$ is the disorder average. The the subscripts $N_0$ indicates number of qubits prepared in the zero state. In the initial CDW state, the value of the discontinuity is $\delta = 1$. For ergodic systems in both the thermodynamic and large-depth limits $D\to\infty$, the discontinuity vanishes, $\lim_{D\to\infty} \delta \to 0$, as the system loses all local memory of the initial state and the diagonalized OPDM becomes proportional to identity matrix. In contrast, in the MBL regime, the discontinuity saturates at a constant value $\lim_{D\to\infty} \delta >0$ due to the presence of stable quasiparticles. For finite systems and restricted depth $D$, we can use finite-size analysis of the discontinuity to detect the presence of a phase transition, as we show in Fig.~\ref{fig2:1dlocalization}c and d.

\subsection{Local integrals of motion (LIOMs)}

The central topic of our study are integrals of motion, which we denote by symbol $L$. In particular, we aim to find a set of real coefficients $\{a_\mu\}$ such that the operator
\be
L = \sum_{\mu=1}^{4^n-1} a_\mu P_\mu
\ee
is conserved, i.e. $[L,U_F] = 0$, where $P_\mu$ denote all possible generalized Pauli operators except the identity (e.g., for 4 qubits this set would include, for example, the operators like $XIZY$ or $IIXZ$ but not $IIII$). 

In a quantum system comprising $n$ qubits, there exist precisely $2^n$ independent integrals of motion. One possible choice of these operators is the ket-bra set $\{|\alpha\rangle\langle \alpha|\}$, where $|\alpha\rangle$ are the eigenstates of the Floquet unitary. It is crucial to highlight that these integrals of motion are nonlocal, signifying that they depend on Pauli matrices $P_\mu$ acting on most of the qubits. However, for certain integrable systems, sets of integrals of motion can be derived from a closed set of local conserved operators. As an example, a unitary classical spin system exhibits a set of Pauli operators $Z_k$ that commute with the unitary evolution operator. These operators can be summed and multiplied between each other to generate arbitrary sets of conserved operators.

 In this context, the MBL phase can be understood similarly to a classical spin system \cite{serbyn2013lioms,huse2014lioms}. In particular, there are $n$ local integrals of motion (LIOMs), similar to $Z_k$. Unlike classical systems, however, they are not diagonal and act on many qubits. For these LIOMs, the coefficients follow an empirical bound
\be\label{eq:localization}
|a_{\mu}|\leq \exp(-(|P_\mu|+d_\mu)/\xi),
\ee
where $|P_\mu|$ is a full support of the Pauli matrix $P_\mu$, $d_\mu$ is the distance between the central qubit of the LIOM and the center of the operator $P_\mu$, and $\xi$ is the localization length. By \textit{full support} here we mean the size of the smallest continuous subset $P_\mu$ acts on non-trivially. For example, in one dimension, both $IIYZZ$ and $IXIZI$ have support~3. By the center of the operator $P_\mu$ we mean the geometrical center that the subset is supported on, or qubits 3 and 2 (assuming zero-indexing), respectively, in the two previous examples.

Due the method's error and effect of noise, obtained LIOMs will be approximate, with deviation quantified as
\be\label{eqs:operator_error}
\epsilon := \frac{\|[U_F,L]\|_F}{2\|L\|_F}.
\ee
This quantity is bounded as $0\leq \epsilon\leq 1$ and vanishes when $[U_F,L] = 0$. The numerator and denominator of this expression are given by
\be
\begin{split}
&\|[U_F,L]\|_F^2 = 2^{n+2} \sum_{\mu\nu}a_{\mu}B_{\mu\nu}a_{\nu},\\
&\|L\|_F^2 = 2^n \sum_\mu a_{\mu}^2,
\end{split}
\ee
where we defined the matrix of the connected single cycle time correlation function for two Pauli operators as
\be\label{eqs:B_matrix}
B_{\mu\nu} = \frac{1}{2}\Bigl(-\frac{1}{2^n}\Tr(U_F^\dag P_\mu U_FP_\nu)+\delta_{\mu\nu}\Bigl).
\ee
To avoid confusion with other errors, we refer to $\epsilon$ as \textit{LIOM imprecision} throughout the main text and subsequent sections. Using the matrix $B$, we can express the LIOM imprecision as
\be
\epsilon = \Biggl[\frac{ \sum_{\mu\nu}a_{\mu} B_{\mu\nu}  a_{\nu}}{\sum_\mu a_\mu^2}\Biggl]^{1/2} \equiv \sqrt{\tilde a^TB\tilde a},
\ee
where $\tilde a =\{ \tilde a_\mu\}$ represents the vector of normalized coefficients $\tilde a_\mu = a_\mu/\sqrt{\sum_\mu a_\mu^2}$.

Let us evaluate the imprecision $\epsilon_0$ for the one-dimensional circuit using $L\equiv Z_i$, where $i\neq 1,n$. To do this, we note that
\be
\begin{split}
\epsilon_0 &= \frac{1}{2^{n/2+1}}\|[U_F,Z_i]\|_F\\
&=\frac{1}{2^{n/2+1}}\|C^Z_{i,i+s}U_i(\theta)Z_iU_i(\theta)C^Z_{i,i+s}\\
 &\qquad\qquad-P^\dag (\phi_i)U_i(\theta)Z_iU_i(\theta)P(\phi_i)\|_F,
\end{split}
\ee
where $s = +1$ for even $i$ and $s=-1$ for odd $i$. Here, we used the Floquet unitary decomposition in Eqs.~\eqref{eqs:floquet_decomposition} and \eqref{eqs:floquet_layer_decomposition} and the fact that the Frobenius norm is invariant to multiplication of the operator by unitary either from the left or from the right. The operator terms inside the norm can be expressed as
\be
\begin{split}
&C^Z_{i,i+s}U_i(\theta)Z_iU_i(\theta)C^Z_{i,i+s} = Z_i\cos\theta+X_iZ_{i+s}\sin\theta\\
&P^\dag (\phi_i)U_i(\theta)Z_iU_i(\theta)P(\phi_i) = Z_i\cos\theta+X_i\sin\theta\cos\phi_i\\
&\qquad\qquad\qquad\qquad\qquad\qquad\qquad\qquad+Y_i\sin\theta\sin\phi_i.
\end{split}
\ee
As the result, we have
\be
\begin{split}
\epsilon_0 &= \frac{\sin\theta}{2^{n/2+1}}\|X_iZ_{i+s}-X_i\cos\phi_i-Y_i\sin\phi_i\|_F = \frac{\sin\theta}{\sqrt{2}}.
\end{split}
\ee
This result does not depend on $\phi_i$ and thus on the disorder realization or the qubit position (except for the first and the last qubit we did not consider above). The value of $\epsilon_0$ is shown as a dashed red line in Fig.~\ref{fig:1dLIOMs}e for $\theta = 0.1\pi$.

The following sections describe the process of generating LIOMs. This process depends on an initial guess. The accuracy of this initial guess is not critical, but an improved estimate is expected to lead to better convergence. We then present two strategies for formulating this initial estimate: one based solely on theoretical analysis and the other based solely on experimental data. Finally, we examine the convergence of our approach and its limitations, in particular those induced by the presence of noise.

\subsection{Generating LIOMs via time averaging}
\label{averaging_method}

Given an initial guess $L^{(0)}$, we derive the corresponding approximate LIOM using the expression
\be\label{eqs:d_liom_approx}
L^{(D)} = \mathbb E_{d \leq D}\left[\mathcal E^d_F(L^{(0)})\right],
\ee
where the \textit{inverse-time} Floquet cycle map $\mathcal E_F(\cdot)$ and its average $\mathbb E_{d \leq D}$ over cycles $0$ to $D$ cycles are defined as
\be\label{eqs:fourier_channel}
\begin{split}
&\mathcal E_F(O) : = U_F O U_F^\dag,\\
&\mathbb E_{d \leq D}[O_d] := \frac{1}{D+1}\sum_{d=0}^{D}O_d,
\end{split}
\ee
where $\{O_d\}$ is an arbitrary sequence of $D$ operators.
Note that in the inverse time the order of the operator $U_F$ and the conjugate operator $U_F^\dagger$ is reversed compared to the forward time evolution of operators in the Heisenberg picture (or in the order corresponding to a density matrix). This order is chosen so that it is convenient to express $L^{(D)}$ below in terms of the evolution of the eigenstates of the operator $L^{(0)}$. This operator becomes an exact LIOM in the limit $D\to\infty$, i.e. $[L^{(\infty)},U_F] = 0$. Indeed, 
\be
U_F L^{(\infty)} U^\dag_F = \lim_{D\to\infty}\frac{1}{D+1}\sum_{d=0}^{D} U^{d+1}_F L^{(0)} U^{d+1\dag}_F = L^{(\infty)}.
\ee  
In the following section, we explore the convergence of this method for finite $D$ (see Proposition 1). 

The Pauli coefficients of the operator in Eq.~\eqref{eqs:d_liom_approx} are
\be\label{eqs:a_to_w}
\begin{split}
a_\mu &= \frac{1}{2^n}\Tr(P_\mu L^{(D)})\\
&= \frac{1}{2^n}\frac 1D\sum_{d=0}^{D-1}  \Tr( U^{d\dag }_F P_\mu U^d_F L^{(0)}) \equiv \frac 1D\sum_{d=0}^{D-1} W_d(P_\mu,L^{(0)}),\\
\end{split}
\ee
which leads us to Eq.~\eqref{eq:fourier_transform} in the main text in the limit $D\to\infty$. Here, we used the notation
\be
W_d(P_\mu,L^{(0)}) =\frac{1}{2^n} \Tr( U^{d\dag }_F P_\mu U^d_F L^{(0)}).
\ee

As the first step for evaluating the trace in Eq.~\eqref{eqs:a_to_w}, we use the spectral decomposition of the inital guess operator $L^{(0)}$ as
\be
L^{(0)} = \sum_{\alpha=1}^{2^n} p_\alpha|\psi_\alpha\>\<\psi_\alpha|,
\ee
where $p_\alpha$ and $|\psi_\alpha\>$ are corresponding eigenvalues and eigenstates. This expression leads us to
\be
\begin{split}
W_d(P_\mu,L^{(0)}) = \frac{1}{2^n}\sum_{\alpha=1}^{2^n} p_\alpha \<\psi_\alpha|U_F^{d\dag}P_\mu U^d_F|\psi_\alpha\>.
\end{split}
\ee
 If we choose $L^{(0)}$ to be an operator in the computational basis, one could approximate the coefficients using a sum over a limited subset of bitstrings $\mathcal S$, which size $|\mathcal S|$ can be small. (Although one could use Monte Carlo sampling of these bitstrings, in the next section, we propose a more efficient method that is also suitable for shallow circuits.) Thus we get
\be\label{eqs:corr_funct_approx}
W_d(P_\mu,L^{(0)}) \approx \frac{1}{|\mathcal S|}\sum_{\alpha\in|\mathcal S|} p_\alpha \<\psi_\alpha|U_F^{d\dag}P_\mu U^d_F|\psi_\alpha\>.
\ee
An advantage of this method is that it allows us to obtain the coefficients $a_\mu$ individually, without accessing the whole operator $L$. Therefore, we can learn some properties of $L$ without running out of classical memory to keep its full representation. This is useful when the system is close to the critical point where the correlation length $\xi$ in Eq.~\eqref{eq:localization} is large, leading to a significant number of relevant coefficients $a_\mu$ that grows exponentially with $\xi$.  Another advantage of this method is that the noise may partially self-average as a result of summing over many Floquet cycles in Eq.~\eqref{eqs:a_to_w}. 

In the literature, it is often considered a set of LIOMs that are orthonormal with respect to inner products and do not retain the properties of Pauli operators. Such operators are commonly called logical bits or ``l-bits". Our operators do not have such properties, but one could obtain the set of l-bits by classical postprocessing using a nonlinear transformation of the obtained LIOMS.

\subsection{Choosing the initial guess}
\label{initial_guess} 

In this section, we present and discuss several strategies for selecting operators that can serve as initial guesses. The first strategy is the straightforward selection of Pauli $Z$ operators. The second strategy requires an understanding of the $U_F$ structure of the Floquet unitary and aims at minimizing the norm of the commutator $[U_F,L]$. The third strategy extracts the initial estimate from experimental data and reduces the variance of the operator $L^{(0)}$ throughout its time evolution. Both theoretical and experimental optimization strategies are not perfect as standalone LIOM discovery methods due to their lack of scalability: they require a complete classical representation of the LIOM. However, we believe they are adequate for making an initial guess as we can limit our considerations to LIOMs with a support that meets available computational resources. In the following, we discuss each of these strategies in further detail. \\

\textbf{Strategy 1: a simple guess}. Since the system is strongly disordered in the z-direction, it is reasonable to assume that the operators $Z_i$ for $i=1,\dots,n$ represent a good option for the initial guess,
\be
L^{(0)} \in \{Z_i\}.
\ee
Due to the depth limitations ($D\sim 10$) given the existing level of noise (see next section), we found this choice to be the best at illustrating the capability of quantum hardware to converge to better LIOMs. This is why it is used in the main text. However, we also explored other options that could generate even better initial guess values. Below we list two such methods that may be useful for deeper circuits.\\ 

\textbf{Strategy 2: theoretical optimization}. Let us assume that we have an access to the full Floquet unitary. We are looking for our guess in the form
\be\label{eqs:L0}
L^{(0)} = \sum_{\mu\in Q} a_{\mu 0} P_\mu, \qquad \sum_\mu a^2_{\mu0} = 1,
\ee
where $Q$ is a certain subset of Pauli operators with number $q$ of $k$-local Pauli operators. It is also beneficial to choose only Pauli operators in the computational basis to reduce the computational resources required to obtain full LIOMs, as we describe further in the following section.

Next, our goal is to minimize the error
\be
\epsilon_0 := \frac{\|[U_F,L^{(0)}]\|_F}{2\|L^{(0)}\|_F} = \sqrt{a_0^TBa_0},
\ee
where $a_0 = \{a_{\mu0},0\dots 0\}$ is the vector where all coefficient for Pauli operators in $Q$ are replaced by $a_{\mu0}$, and the rest of elements are zero.
This can be done by solving the eigenproblem for the matrix in Eq.~\eqref{eqs:B_matrix},
\be
\sum_{\nu\in Q} B_{\mu\nu} a_{\nu0} = b_0  a_{\mu0}
\ee
for the smallest eigenvalues $b_0$. The minimum error is $\epsilon = \sqrt{b_0}$. Because vector $a_0$ has only $q$ non-zero entrees, we need evaluate only a block of size $q\times q$ corresponding to nonzero $a_{\mu0}$, not the entire matrix $B$.\\

\textbf{Strategy 3: experimental optimization}. Consider $\Psi = \{|\psi_0^{(k)}\>\}$ to be a set of initial states and $Q$ is the set of Pauli matrices supported on a certain subset of qubits. Using the decomposition in Eq.~\eqref{eqs:L0}, we rewrite the expectation value of the evolved initial guess as
\be
\begin{split}
\<\psi_0^{(k)}|U_F^{d\dag}L^{(0)}U_F^d|\psi_0^{(k)}\> = \sum_{\mu\in \mathcal S} a_{\mu0} f_{\mu kd},
\end{split}
\ee
where the coefficients $f_{\mu k m}$ are the expectation values for Pauli operator $P_\mu$ after $n$ cycles of evolution for the $k$-th initial state, i.e.
\be
f_{\mu kd} := \<\psi_0^{(k)}|U_F^{\dag d} P_\mu U_F^d |\psi_0^{(k)}\>.
\ee
Then, the coefficients for an approximate LIOM can be found by minimizing the variation
\be
\{a_{\mu0}\} = {\rm argmin}\, \sum_k{\rm Var}_{d\leq D}\Bigl( \sum_{\mu \in Q} a_{\mu0} f_{\mu k d}\Bigl),
\ee
where ${\rm Var}_{d\leq D} [O_d] := \mathbb E_{d \leq D} [O^2_d]-(E_{d \leq D}[O_d])^2$. In general, the maximum circuit depth $D$ and set size $|\Psi|$ must satisfy $D|\Psi| \geq q$ to avoid overfitting. Notably, this method can be implemented even for shallow circuits. Technically, for noiseless circuits, depth $D=1$ is sufficient if the set $\Psi$ is large enough.

The optimal values of $a_{\mu0}$ can be found by solving the spectral problem
\be
A_{\mu\nu}a_{\nu 0} = wa_{\mu 0},
\ee
for smallest possible $w$, where the matrix $A$ is generated as
\be
\begin{split}
A_{\mu\nu} = \sum_k\Bigl(\mathbb E_{d \leq D}& [f_{\mu kd}f_{\nu kd}]\\
&-\mathbb E_{d \leq D}[f_{\mu kd}] \mathbb E_{d'\leq D}[f_{\nu kd'}]\Bigl).
\end{split}
\ee

Both strategy 2 and strategy 3 are not scalable because they require classical storage and processing of \textit{all} $a_{\mu0}$ that define the LIOM. This number grows exponentially with the support of LIOMs and quickly becomes intractable, especially near the phase transition point. However, by truncating the support, these methods can be used to find a good initial guess.

\subsection{Convergence and noise}
\label{convergence_and_noise}

This section focuses on evaluating the performance of the proposed method under noise constraints. First, we postulate two essential properties of the studied quantum dynamics. These properties are then used to derive two error bounds: with respect to finite depth and with respect to noise. By combining these error estimates, we determine the optimal circuit depth for the experiment.

The first property concerns the conservation of local operators. In the MBL regime, the values of most of the local operators are partially conserved due to their overlap with LIOMs. A formal statement of memory preservation can be expressed by an inequality applied to local operators $O$,
\be\label{eqs:memory}
\Tr(U_F^{d\dag}OU_F^dO) \geq \lambda_d(O) \|O\|_F^2,
\ee
where $\|\cdot\|_F$ is the Frobenius norm, $\lambda_d(O) \geq 0$ is an operator-specific memory parameter that remains constant or decreases monotonically with increasing depth $d$.
In the MBL phase, we expect the parameter to remain nonzero, i.e. $\lim_{d\to\infty} \lambda_d(O)>0$. If the system is egodic, we expect $\lambda_d(O) \gtrsim \exp(-d/D_{\rm th})$ for local operators, where $D_{\rm th}$ is the thermalization time scale. The term ``prethermal" refers to a regime in which thermalization proceeds at a very slow rate, such that $D_{\rm th} \gg D$. In other words, the thermalization time scale is much larger than the number of observation cycles $D$.

The second property is localization. For all systems, including localized ones, there is a Lieb-Robinson bound
that limits the instantaneous propagation of information for large distances. For the MBL phase, this bound can be expressed in a stronger, ``logarithmic'' form \cite{burrell2007bounds},
\be\label{eqs:localization}
 \|[U_F^{d\dag}O_1U_F^d,O_2]\| \leq c\|O_1\|\|O_2\|d^\alpha e^{-{\rm dis}(O_1,O_2)/\xi_0},
\ee
where $\|\cdot\|$ is the operator norm, ${\rm dis}(O_1,O_2)$ is the distance between the operators $O_1$ and $O_2$, $\xi_0$ is the localization length, $c$ is a contant, and $\alpha>0$ is the exponent. For slowly thermalizing ergodic systems, we expect this inequality to hold empirically if we replace the constant localization length by a slowly changing cycle-dependent one, $\xi_0\to\xi_d$. We further assume that the localization length grows very slowly with cycle number $d$, so that its change remains negligible within $D_{\rm th}$ cycles.

Using the first property, we can show that as the maximum cycle depth $D$ grows, the method progressively converges to the target LIOM.

\begin{prop}[Depth convergence] \label{prop_conv}
For any Floquet dynamics satisfying Eq.~\eqref{eqs:memory}, the imprecision of the operator in Eq.~\eqref{eqs:d_liom_approx} is bounded as
\be\label{eqs:convergence}
\epsilon := \frac{\|[L^{(D)},U_F]\|_F}{2\|L^{(D)}\|_F} \leq \frac{1}{\sqrt{\lambda_D(L^{(0)})}D}.
\ee
\end{prop}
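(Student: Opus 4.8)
The plan is to bound the numerator and denominator of $\epsilon$ separately, exploiting the unitary invariance of the Frobenius norm together with the memory property in Eq.~\eqref{eqs:memory}, and then to combine the two estimates. The numerator will collapse via a telescoping argument, while the denominator will be lower-bounded term-by-term using the memory inequality and the assumed monotonicity of $\lambda_d$.

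First, for the numerator I would use that right-multiplication by the unitary $U_F^\dagger$ leaves $\|\cdot\|_F$ unchanged, so that $\|[L^{(D)},U_F]\|_F = \|L^{(D)} - \mathcal{E}_F(L^{(D)})\|_F$. Substituting the definition in Eq.~\eqref{eqs:d_liom_approx} and using $\mathcal{E}_F(\mathcal{E}_F^d(L^{(0)})) = \mathcal{E}_F^{d+1}(L^{(0)})$, the average telescopes: every term cancels except the two endpoints, giving $L^{(D)} - \mathcal{E}_F(L^{(D)}) = \tfrac{1}{D+1}\big(L^{(0)} - \mathcal{E}_F^{D+1}(L^{(0)})\big)$. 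Since $\mathcal{E}_F$ is norm-preserving, the triangle inequality then yields $\|[L^{(D)},U_F]\|_F \leq \tfrac{2}{D+1}\|L^{(0)}\|_F$.

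Second, for the denominator I would expand $\|L^{(D)}\|_F^2$ as a double sum over cycle indices $d,d'$ of the overlaps $\langle \mathcal{E}_F^{d}(L^{(0)}), \mathcal{E}_F^{d'}(L^{(0)})\rangle_F$. The key observation is that each overlap reduces, after cyclicity of the trace and Hermiticity of $L^{(0)}$, to $\Tr\big(U_F^{k\dagger} L^{(0)} U_F^{k} L^{(0)}\big)$ with $k = |d-d'|$, which is exactly the left-hand side of Eq.~\eqref{eqs:memory}. Applying that inequality and then the monotonicity $\lambda_k(L^{(0)}) \geq \lambda_D(L^{(0)})$ for $k \leq D$ bounds each of the $(D+1)^2$ terms from below by $\lambda_D(L^{(0)})\|L^{(0)}\|_F^2$, so $\|L^{(D)}\|_F \geq \sqrt{\lambda_D(L^{(0)})}\,\|L^{(0)}\|_F$. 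Combining the two estimates gives $\epsilon \leq \tfrac{1}{(D+1)\sqrt{\lambda_D(L^{(0)})}} \leq \tfrac{1}{\sqrt{\lambda_D(L^{(0)})}\,D}$, as claimed.

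I expect the main obstacle to be the denominator step, specifically verifying that each off-diagonal overlap is real and genuinely matches the memory-correlation form so that Eq.~\eqref{eqs:memory} applies term-by-term. Concretely, one must check that $\Tr(U_F^{k}L^{(0)}U_F^{k\dagger}L^{(0)})$ equals $\Tr(U_F^{k\dagger}L^{(0)}U_F^{k}L^{(0)})$; this holds because $U_F^{k}L^{(0)}U_F^{k\dagger}$ is Hermitian and the trace of a product of two Hermitian operators is real. A secondary subtlety is that it is the full double sum, not merely its diagonal, that upgrades the denominator from the naive $\lambda_D$ scaling to the square-root scaling $\sqrt{\lambda_D}$: all cross terms must be controlled by the memory bound, which is exactly what the monotonicity assumption on $\lambda_d$ secures.
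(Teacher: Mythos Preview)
Your proposal is correct and follows essentially the same strategy as the paper: telescoping plus unitary invariance for the numerator, and the memory inequality Eq.~\eqref{eqs:memory} for the denominator. Your denominator step is in fact more careful than the paper's---the paper writes $\Tr(L^{(D)2}) = \mathbb E_{d\leq D}\Tr(U_F^d L^{(0)} U_F^{d\dagger} L^{(0)})$ as an equality, which is not literally true for $D\geq 2$ (the left-hand side is a weighted double sum over lags), whereas you correctly expand the full double sum, verify via Hermiticity that each cross term matches the form in Eq.~\eqref{eqs:memory}, and bound term by term using the monotonicity of $\lambda_d$.
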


\begin{proof}
First, we rewrite the commutator as
\be
\begin{split}
[U_F,L^{(D)}] &= \Bigl(U_F L^{(D)} U^\dag_F - L^{(D)}\Bigl)U_F \\
&= \frac{1}{D}\Bigl(U_F^{d+1}L^{(0)}U^{d+1\dag}_F-L^{(0)}\Bigl)U_F.
\end{split}
\ee
Next, using the invariance of the Frobenius norm under rotations as well as the triangle inequality, we derive that
\be\label{eqs:norm_commut}
\begin{split}
\|[U_F,L^{(D)}]\|_F & =\frac{1}{D}\Bigl\|\mathbb E_{d \leq D}\Bigl(U_F^{d+1}L^{(0)}U^{ d+1\dag}_F-L^{(0)}\Bigl)U_F\Bigl\|_F \\
& = \frac{1}{D}\Bigl\|\mathbb E_{d \leq D} U_F^{ d+1}L^{(0)}U^{d+1\dag}_F-L^{(0)}\Bigl\|_F \\
& \leq \frac{2}{D}\|L^{(0)}\|_F.
\end{split}
\ee
Finally, we rewrite the norm of the operator $L^{(D)}$ using its definition as
\be\label{eqs:norm_operator_D}
\begin{split}
\|L^{(D)}\|_F&\equiv \sqrt{\Tr(L^{(D)2})} \\
&= \sqrt{\mathbb E_{d \leq D}\Tr\Bigl(U_F^d L^{(0)} U_F^{d\dag} L^{(0)}\Bigr)} \\
&\geq \sqrt{\lambda_D(L^{(0)})} \|L^{(0)}\|_F,
\end{split}
\ee
where in the last step we used Eq.~\eqref{eqs:memory}. By inserting Eq.~\eqref{eqs:norm_commut} and \eqref{eqs:norm_operator_D} into Eq.~\eqref{eqs:convergence} we get the desired result. This step completes our proof.
\end{proof}

Note that the right-hand side of the Eq.~\eqref{eqs:convergence} depends on $\lambda_D(L^{(0)})$. This implies that an accurate initial guess, denoted by $L^{(0)}$, can significantly improve the efficiency of the procedure. In particular, increasing the value of $\lambda_D(L^{(0)})$ will reduce the corresponding error. Thus, the selection of a better initial guess for $L^{(0)}$ is crucial to minimize the error and improve the overall accuracy and efficiency of the method. We also expect $\lambda_D(L^{(0)})$ to decrease exponentially with the localization length $\xi_0$. Thus, systems with smaller localization lengths should show significantly faster convergence.

Improvements made by increasing depth are limited by the effects of noise. Two primary categories of noise affect the system: systematic and stochastic. Systematic noise comes from calibration and readout errors. Predicting the influence of this type of noise is particularly challenging because it is hardware-dependent and can vary subtly over time. However, its impact is most noticeable for a few ``problematic" qubits and/or qubit coupling, see Fig.~\ref{fig:noiseerrorLIOMs}. For the majority of qubits, the main source of error is stochastic noise.

We consider a simplistic noise model where we assume that after each Floquet cycle the qubits are subjected to a single-qubit depolarizing channel. In this context, the unitary transformation $\mathcal E_F$ previously described in Eq.~\eqref{eqs:fourier_channel} must be replaced by its noisy version, which has the form
\be\label{eqs:noisy_dynamics}
\mathcal E'_F := \prod_{r=1}^{n} \Bigl((1-p)\hat I+\frac{p}{3}\sum_{ar}\Lambda_{ar}\Bigl) \mathcal E_F,
\ee
where $p$ is the error probability, $\hat I$ is the identity superoperator, and $\Lambda_{ar}$ is a single-qubit Pauli channel.
\be
\Lambda_{ar}(\cdot) := P_{ar}(\cdot)P_{ar},
\ee
and $P_{ar}\in\{X_r,Y_r,Z_r\}$ are single-qubit Pauli operators.

Consequently, the outcome obtained from the experimental data corresponds is the modified approximate LIOM $\tilde L^{(D)}$ that takes the form
\be\label{eqs:noisy_liom}
\tilde L^{(D)} = \mathbb E_{d \leq D}\mathcal E'^d_F(L^{(0)}).
\ee
To establish the effect of the noise, we prove the following Proposition.
\begin{prop}\label{noise_err_prop}
For any noiseless dynamics satisfying \eqref{eqs:localization}, the noisy dynamics in Eq.~\eqref{eqs:noisy_dynamics} has the action
\be\label{eqs:approximation}
\mathcal E'^d_F(L^{(0)}) = \mathcal E^d_F(L^{(0)})+\delta C_d,
\ee
where the correction operator $\delta C_d$ satisfies
\be
\frac{\|\delta C_d\|}{\| L^{(0)}\|}\leq cpd\sum_{r}\exp(-\Delta_r(L^{(0)})/\xi_d),
\ee
and $\Delta_r$ is the lattice distance between the qubit $r$ and the support of operator $L^{(0)}$.
\end{prop}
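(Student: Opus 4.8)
The plan is to expand the noisy evolution around the ideal one and to isolate, cycle by cycle, the single-qubit error events, which the localization bound \eqref{eqs:localization} forces to be weak on qubits far from the support of $L^{(0)}$. The crucial first step is to rewrite the single-qubit depolarizing factor in superoperator form. Using $P_{ar}^2=I$ one has $P_{ar}OP_{ar}=O-P_{ar}[P_{ar},O]$, so that $\mathcal N_r=(1-p)\hat I+\frac p3\sum_a\Lambda_{ar}$ collapses to $\mathcal N_r=\hat I-\frac{4p}{3}\mathcal B_r$, where the noise-sensitivity superoperator
\be
\mathcal B_r(O)=\frac14\sum_a P_{ar}[P_{ar},O]
\ee
projects $O$ onto its component acting nontrivially on qubit $r$. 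This identity is the conceptual core of the argument: it expresses the action of noise purely through commutators $[P_{ar},O]$, precisely the objects controlled by Eq.~\eqref{eqs:localization}.

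Next I would telescope the difference between $d$ noisy and $d$ ideal cycles. Writing $\mathcal N=\prod_r\mathcal N_r$ and $\delta\mathcal N=\hat I-\mathcal N$, the identity $A^d-B^d=\sum_{j=0}^{d-1}A^j(A-B)B^{d-1-j}$ with $A=\mathcal N\mathcal E_F$ and $B=\mathcal E_F$ gives
\be
\delta C_d=-\sum_{j=0}^{d-1}(\mathcal N\mathcal E_F)^j\,\delta\mathcal N\,\mathcal E_F^{d-j}(L^{(0)}).
\ee
Each prefactor $(\mathcal N\mathcal E_F)^j$ is a contraction in operator norm: $\mathcal E_F$ preserves the norm as unitary conjugation, and each $\mathcal N_r=(1-\frac{4p}{3})\hat I+\frac{4p}{3}\mathcal A_r$, with $\mathcal A_r(\cdot)=\frac14\sum_{b\in\{I,X,Y,Z\}}P_{br}(\cdot)P_{br}$, is a convex combination of the norm-preserving identity and the averaging map $\mathcal A_r$, itself contractive as an average of unitary conjugations. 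Hence $\|\delta C_d\|\le\sum_{j=0}^{d-1}\|\delta\mathcal N\,\mathcal E_F^{d-j}(L^{(0)})\|$, and the noise enters only through $\delta\mathcal N$ acting on the Heisenberg-evolved operator $\mathcal E_F^{m}(L^{(0)})=U_F^{m}L^{(0)}U_F^{m\dagger}$ with $m=d-j$.

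The third step bounds $\delta\mathcal N$ on a spread operator. Expanding $\delta\mathcal N=\hat I-\prod_r(\hat I-\frac{4p}{3}\mathcal B_r)=\frac{4p}{3}\sum_r\mathcal B_r+O(p^2)$ and using the commutator identity together with unitary invariance of the operator norm (conjugating the commutator by $U_F^{m\dagger}$),
\be
\|\mathcal B_r(\mathcal E_F^{m}(L^{(0)}))\|\le\frac14\sum_a\|[U_F^{m\dagger}P_{ar}U_F^{m},\,L^{(0)}]\|.
\ee
Applying Eq.~\eqref{eqs:localization} with $O_1=P_{ar}$, $O_2=L^{(0)}$, $\mathrm{dis}(P_{ar},L^{(0)})=\Delta_r(L^{(0)})$, and $\|P_{ar}\|=1$ bounds each term by $c\,\|L^{(0)}\|\,m^\alpha e^{-\Delta_r/\xi_0}$, so $\frac{4p}{3}\|\mathcal B_r(\mathcal E_F^{m}(L^{(0)}))\|\le p\,c\,\|L^{(0)}\|\,m^\alpha e^{-\Delta_r/\xi_0}$ (the factors $\frac43\cdot\frac34$ cancel). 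Summing over $r$ and over $m=1,\dots,d$, and invoking the cycle-dependent localization length $\xi_0\to\xi_d$ appropriate to slowly thermalizing dynamics, yields
\be
\frac{\|\delta C_d\|}{\|L^{(0)}\|}\le c\,p\sum_{m=1}^{d}m^\alpha\sum_r e^{-\Delta_r/\xi_d};
\ee
absorbing the polynomial depth factor $\sum_m m^\alpha\le d^{1+\alpha}$ and the $O(p^2)$ corrections into the constant $c$ (legitimate at the small $p$ and bounded $d$ of the experiment) gives the stated bound with a single power of $d$.

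I expect the main obstacle to be the honest control of the depth dependence and the higher-order-in-$p$ terms rather than the conceptual content. The naive telescoping produces $\sum_{m\le d}m^\alpha$ rather than a clean factor of $d$, which must be folded into the constant or the effective $\xi_d$; and the multi-qubit noise contributions $\prod_{r\in S}\mathcal B_r$ have to be argued subdominant, each extra $\mathcal B_r$ supplying another power of $p$ and another exponentially small overlap, which requires the uniform contractivity estimate $\|\mathcal B_r\|\le2$. By contrast, the genuinely new idea—that depolarizing noise can be recast as a sum of commutators so that localization of operator spreading directly limits sensitivity to faraway errors—is entirely contained in the first step.
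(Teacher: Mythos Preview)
Your argument is correct, but organized differently from the paper's. The paper does not telescope or rewrite $\mathcal N_r$ in commutator form; instead it expands $\mathcal E_F'^d$ as an exact stochastic sum over error histories,
\[
\mathcal E_F'^d \;=\; \mathcal E_F^d\sum_{\vec r,\vec a,\vec d}p_{\vec r\vec a\vec d}\,\Lambda^{(d_1)}_{a_1r_1}\cdots\Lambda^{(d_k)}_{a_kr_k},
\qquad
\Lambda^{(d)}_{ar}(\cdot)=V^{(d)}_{ar}(\cdot)V^{(d)\dagger}_{ar},\quad V^{(d)}_{ar}=U_F^{\dag d}P_{ar}U_F^{d}.
\]
The correction is then a probability-weighted sum of terms $\|\Lambda_1\cdots\Lambda_k(L^{(0)})-L^{(0)}\|=\|[V_1\cdots V_k,L^{(0)}]\|$, and the Leibniz-type estimate $\|[V_1\cdots V_k,O]\|\le\sum_j\|[V_j,O]\|$ (using $\|V_j\|=1$) reduces everything to single-error commutators, each bounded by Eq.~\eqref{eqs:localization}. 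Resumming the trajectory probabilities reproduces the factor $p$ and the sums over $r$ and over cycles, with no separate treatment of $O(p^2)$ terms needed because the trajectory expansion is exact. Your route makes the mechanism more explicit at leading order---the identity $\mathcal N_r=\hat I-\tfrac{4p}{3}\mathcal B_r$ displays noise literally as a commutator probe---while the paper's trajectory expansion handles all orders in $p$ uniformly without invoking contractivity of $\mathcal B_r$ or a perturbative remainder. Both reach the same endpoint: a bound proportional to $c\,p\,d^{\alpha+1}\sum_r e^{-\Delta_r/\xi_d}$. The discrepancy you flag between this and the single power of $d$ in the proposition statement is real; the paper's own proof also concludes with $d^{\alpha+1}$, so this is an imprecision in the statement rather than a gap in either argument.
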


\begin{proof} We start by stochastically decomposing the noisy evolution superoperator as
\be
\begin{split}
\mathcal E_F'^d & = \sum_{\vec r\vec a\vec d} p_{\vec r\vec a\vec d}\mathcal E_F^{d-d_1}\Lambda_{a_1r_1}\mathcal E_F^{d_1-d_2}\Lambda_{a_2 r_2} \dots \Lambda_{a_kr_k}\mathcal E_F^{d_k}\\
& = \mathcal E_F^d \sum_{\vec r\vec a\vec d} p_{\vec r\vec a\vec d}\Lambda^{(d_1)}_{a_1r_1}\Lambda^{(d_2)}_{a_2 r_2} \dots \Lambda^{(d_k)}_{a_kr_k},
\end{split}
\ee
where $p_{\vec r\vec a\vec m} = (1-p)^{nd-k}(p/3)^{k}$ represents the probability of a combination of $0\leq k\leq nd$ single-qubit Pauli errors $\vec a = \{a_1, \dots,a_k\}$ involving qubits $\vec r = \{r_1,\dots, r_k\}$ at cycle depths $\vec d = \{d_1,\dots, d_k\}$. Here we have introduced the time-renormalized unitary error channels as
\be
\Lambda_{ar}^{(d)}(\cdot) := \mathcal E_F^{\dag d}\Lambda_{ar}\mathcal E_F^{d}= V_{ar}^{(d)}(\cdot)V_{ar}^{(d)\dag},
\ee
where we use the notation
\be
V_{a r}^{(d)} := U_F^{\dag d}P_{a r}U_F^{d}
\ee
and $P_{ar}\in \{X_{r},Y_{r},Z_{r}\}$ is single-qubit Pauli operator acting on qubit $r$.

\begin{figure}[t!]
\centering
\includegraphics[width=0.37\textwidth]{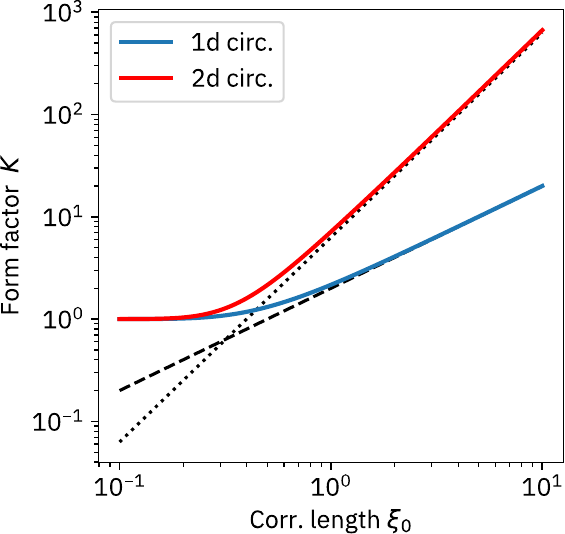}
\caption{\textbf{Noise form factor}. Form factor $K$ as a function of correlation length $\xi_0$ for $c=1$. The blue curve shows $K$ for infinite one-dimensional circuits, while the red curve shows a similar dependence for two-dimensional heavy-hexagonal circuits. The dashed curve shows the asymptotics $K = 2\xi_0$ and the dotted curve shows $K=2\pi\xi_0^2$. The support of the operator $L^{(0)}$ in the two-dimensional case is a qubit with three neighbors (the case of two-neighbor qubit is very similar and is not shown).}
\label{fig:noiseformfactor}
\end{figure}

\begin{figure*}[t!]
\centering
\includegraphics[width=1
\textwidth]{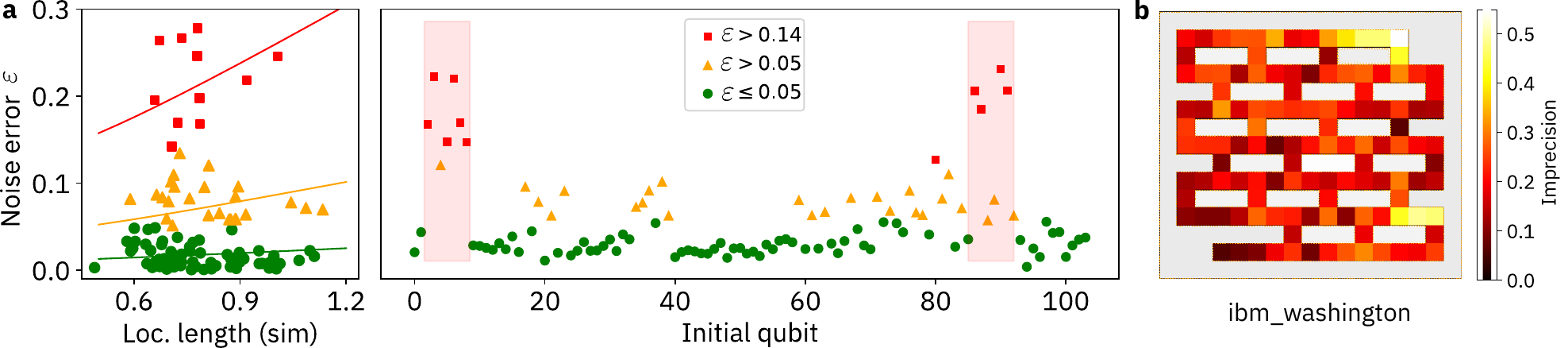}
\caption{
\textbf{Local integrals of motion (LIOMs) error due to noise.}
\textbf{a.} Noise error for 1D experiment. The plot shows the deviation of the coefficients $\vec a = \{a_\mu\}$ in Eq.~\eqref{eqs:a_to_w} obtained experimentally from the same coefficients $\vec a_{\rm sim}$ obtained by noiseless simulation, as quantified by the noise error $\varepsilon = \|\vec a-\vec a_{\rm sim}\|_2/\|\vec a_{\rm sim}\|_2$. For illustration, the LIOMs are divided into three groups: low precision (red, $\varepsilon>0.14$), medium precision (orange, $0.05<\varepsilon\leq 0.14$), and high precision (green, $\varepsilon\leq 0.05$). Left panel shows the distribution of LIOMs as a function of their localization length (derived from simulations). The panel shows that low- and high-precision LIOMs have similar localization lengths. Solid lines represent the theoretical estimate for $\varepsilon$ in Eq.~\eqref{eqs:optimal_params} for $p/\sqrt{\lambda_0} = 10^{-3}$ (green), $4\cdot 10^{-2}$ (orange) and $1.2\cdot 10^{-2}$ (red). The right panel shows the dependence of the error on the position of the initial guess $L^{(0)}$. This panel shows that the high errors are clustered around certain regions (shaded by red), suggesting that high errors are generated by a few ``bad" qubits or connections. \textbf{b.} LIOM imprecision for a 2D experiment. Each qubit $i$ is assigned a value of LIOM imprecision associated with the initial guess $L^{(0)} = Z_i$. Values are displayed in a 2D grid where neighboring cells correspond to neighboring qubits. As in 1D, anomalous values of imprecision are concentrated in a few regions. }
\label{fig:noiseerrorLIOMs}
\end{figure*}

Next, we rewrite the action of $d$ cycles of noisy evolution as  in Eq.~\eqref{eqs:approximation}, where we introduce the correction operator as
\be
\begin{split}
\delta C_m = \mathcal E_F^m\sum_{\vec r\vec a\vec m} p_{\vec r\vec a\vec m}\Bigl(\Lambda^{(d_1)}_{a_1r_1}\Lambda^{(d_2)}_{a_2 r_2} \dots \Lambda^{(d_k)}_{a_kr_k}(L^{(0)})-L^{(0)}\Bigl).
\end{split}
\ee
Taking the Frobenius norm of this operator and using the fact that it is invariant under unitary transformations $\mathcal E_F$, we get
\be
\begin{split}
\|\delta C\| &\leq \sum_{\vec r\vec a\vec d} p_{\vec r\vec a\vec d}\|\Lambda^{(d_1)}_{a_1r_1}\Lambda^{(d_2)}_{a_2 r_2} \dots \Lambda^{(d_k)}_{a_kr_k}(L^{(0)})-L^{(0)}\|_F.
\end{split}
\ee
Using the invariance of the Frobenius norm under unitary transformations, we get
\be
\begin{split}
\|\Lambda^{(d_1)}_{a_1r_1}\Lambda^{(d_2)}_{a_2 r_2} \dots &\Lambda^{(d_k)}_{a_kr_k}(L^{(0)})-L^{(0)}\|\\
&= \|[V^{(d_1)}_{a_1r_1}V^{(d_2)}_{a_2 r_2} \dots V^{(d_k)}_{a_kr_k},L^{(0)}]\|\\
& \leq \|[V^{(d_1)}_{a_1r_1},L^{(0)}]\|+\dots+\|[V^{(d_k)}_{a_kr_k},L^{(0)}]\|
\end{split}
\ee
Each of these summands can be expressed as
\be
\begin{split}
\|[V^{(d)}_{ar},L^{(0)}]\| = \|&[U_F^{\dag d}P_{a r}U_F^{ d},L^{(0)}]\| \\
&\leq c\|L^{(0)}\|d^\alpha\exp\bigl(-{\rm dis}(P_{ar},L^{(0)})/\xi_d\bigl)
\end{split}
\ee
where we used the condition in Eq.~\eqref{eqs:localization} with $\xi_d$ as a localization length. Therefore
\be
\begin{split}
\|\Lambda^{(d_1)}_{a_1r_1}\Lambda^{(d_2)}_{a_2 r_2} \dots& \Lambda^{(d_k)}_{a_kr_k}(L^{(0)})-L^{(0)}\| \\
 &\leq c\|L^{(0)}\|d^\alpha\sum_{i=1}^k \exp(-{\rm dis}(P_{a_ir_i},L^{(0)})/\xi_d),
 \end{split}
\ee
where we also used the inequality in Eq.~\eqref{eqs:norm_operator_D}. Using the fact that the noise is homogeneous, i.e. applies to all qubits with the same rate, we derive that
\be
\begin{split}
\|\delta C_d\| \leq c\|L^{(0)}\| pd^{\alpha+1} \sum_{r}\exp(-\Delta_r(L^{(0)})/\xi_d).
\end{split}
\ee
Thus, we arrive at our main result.
\end{proof}

For simplicity, we assume that the localization parameters remain constant over the time of our observation, i.e. $\lambda_m(L^{(0)}) \approx \lambda_0$ and $\xi_m\approx \xi_0$. Also, we introduce a form factor
\be\label{eqs:formfactor} K = c\sum_{r}\exp(-\Delta_{r,r_0}/\xi_0),
\ee
where $\Delta_{r,r'}$ is the distance between the qubits $r$ and $r'$. In essence, the form factor represents the degree to which the effect of single-qubit noise is multiplied by the interaction of the qubits for a given circuit geometry and localization length. The dependence of form factor $K$ on the localization length for one- and two-dimensional circuits we use is illustrated in Fig.~\ref{fig:noiseformfactor}.

Then the total error can be estimated as
\be\label{eqs:experimental_error}
\epsilon' = \frac{\|[U_F,\tilde L^{(D)}]\|_F}{2\|\tilde L^{(D)}\|_F}
\ee
where the ``noisy" LIOM is
\be
\tilde L^{(D)} = \frac{1}{D+1}\sum_{d=0}^D \mathcal E'^d_F(L^{(0)}) = L^{(D)} + \frac{1}{D+1}\sum_{d=0}^D \delta C_d,
\ee
and we defined $\delta C$ in Proposition~\ref{noise_err_prop}. Next, we take into account that the error grows monotoneously, $\|\delta C_{d}\|_F\leq \|\delta C_{d+1}\|_F$. Thus the total error is
\be
\epsilon'\leq \frac{\|[U_F,L^{(D)}]\|_F}{2\|\tilde L^{(D)}\|_F}+\frac{\|\delta C_D\|_F}{\|\tilde L^{(D)}\|_F},
\ee
Considering that $\|\tilde L^{(D)}\|_F \simeq \|L^{(D)}\|_F$ in the limit of small $p$, as well as the inequality in Eq.~\eqref{eqs:norm_operator_D}, we get
\be
\epsilon'\lesssim \frac{1}{\sqrt{\lambda_0}}\Bigl(\frac{1}{D}+\frac{\|\delta C_D\|_F}{\|L^{(0)}\|_F}\Bigl),
\ee
where we used the invariance of the Frobenius norm to the unitary transformation and Proposition~\ref{prop_conv}.
Next, we use the inequality connecting the spectral norm to the Frobenius norm, as well as the fact that $L^{(0)}$ is a Pauli operator, which leads us to
\be
\|\delta C_D\|_F \leq 2^{n/2}\|\delta C_D\|, \quad \|L^{(0)}\|_F = 2^{n/2},
\ee
where $n$ is the number of qubits. Using the expession for $\|\delta C_D\|$ from Proposition~\ref{noise_err_prop} and Eq.~\eqref{eqs:formfactor}, we arrive at the expression
\be
\epsilon'\lesssim \frac{1}{\sqrt{\lambda_0}}\Bigl(\frac 1D+pKD^{\alpha+1}\Bigl).
\ee
The optimal depth corresponding to the minimum error and the minimum error value itself are given by the expressions
\be\label{eqs:optimal_params}
\begin{split}
& D_{\rm opt} = \Bigl((1+\alpha)pK \Bigl)^{-\frac{1}{\alpha+2}},\\
& \epsilon'_{\rm min} \simeq\frac{\alpha+2}{\alpha+1} \frac{1}{\sqrt{\lambda_0}}\Bigl((1+\alpha) pK\Bigl)^{\frac{1}{\alpha+2}}.
\end{split}
\ee
Assuming $\alpha = 1$, and using realistic estimates for the gate noise $p\sim 10^{-3}-10^{-2}$ and the parameters $\sqrt{\lambda_0},K\sim 1$, we derive $D_{\rm opt}\sim 5-10$ and $\epsilon\sim 0.1$. These values are in agreement with the experimental results for the majority of LIOMs, as shown in Fig.~\ref{fig:noiseerrorLIOMs}. Furthermore, these results imply a polynomial convergence error of the method with increasing gate fidelity, denoted by $\epsilon_{\rm min} =O(p^{1/3})$.

\section{Experimental workflow}\label{sec:implementation}

\subsection{Hardware}

This section provides details of the experimental implementation. The quantum circuits are run on superconducting quantum processors \texttt{ibm\_kolkata} (27 transmon qubits) and \texttt{ibm\_washington} (127 qubits). The distribution of coherence times for \texttt{ibm\_washington} is shown in Fig.~\ref{fig:deviceerror}a. For this device, the median $T_1$ time of the qubits is 100 $\mu$s, while $T_2$ is 95 $\mu$s. 

The circuit design for both one and two dimensions involves nearest-neighbor coupling between qubits using a single controlled-Z gate per cycle. This gate is further decomposed into native gates, including a single two-qubit controlled-X gate and $SX$ and $RZ$ gates. The error of the controlled-X gate, with a median average of $1.15\%$ on \texttt{ibm\_washington}, is typically one or two orders of magnitude larger than that of a single-qubit gate, which has a median error of $\sim$0.03\%; see Fig.~\ref{fig:deviceerror}b. The provided median readout error is $1.47\%$. The provided gate errors were measured for a simultaneous run, but for potentially different order from our experiment. See also the gate fidelity study for linear chain for \texttt{ibm\_washington} in Ref.~\cite{McKay2023}.

\begin{figure*}[t!]
\centering
\includegraphics[width=0.85
\textwidth]{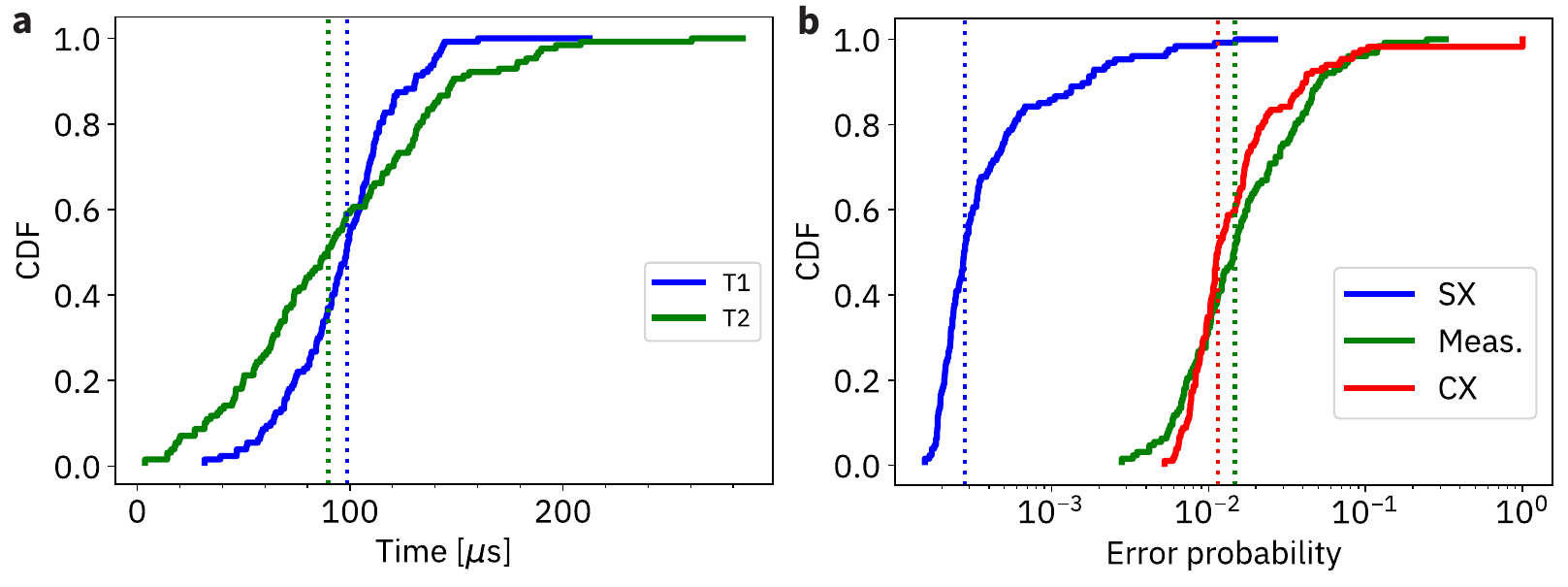}
\caption{ \textbf{Hardware error.} \textbf{a.} Cumulative Distribution Function for $T_1$ and $T_2$ times for \texttt{ibm\_washington.} The vertical lines show the mean times, which are 98.6 $\mu$s for $T_1$ and 89.8 $\mu$s for $T_2$. \textbf{b.} Instruction errors: $SX$ gate, $CX$ gate, (red) and measurement (green). As in the left panel, the vertical lines correspond to the medians for values of $2.81\cdot10^{-4}$ for the $SX$ gate, $1.15\cdot10^{-2}$ for the $CX$ gate, and $1.47\cdot10^{-2}$ for the measurement error.}
\label{fig:deviceerror}
\end{figure*}

 The circuits are submitted to the cloud-hosted devices via the Qiskit IBMQ provider, without exclusive access to device-level calibrations. In total, the results in the main text represent 35,364 raw circuits before error mitigation. Assuming a conservative sampling rate of 2 kHz as in Ref.~\cite{Kim2023} with 8,000 shots per circuit (including error suppression, such as twirling), the total quantum computation time for unmitigated results is $\sim$40 hours. As we discuss in the next section, the total overhead for error mitigation is twice the number of shots, so the mitigated results require a total of $\sim$80 hours of QPU runtime. Additionally, we use the shot budget for error mitigation techniques such as twirling, as detailed below.

\begin{figure*}[t]
\includegraphics[width=0.95\textwidth]{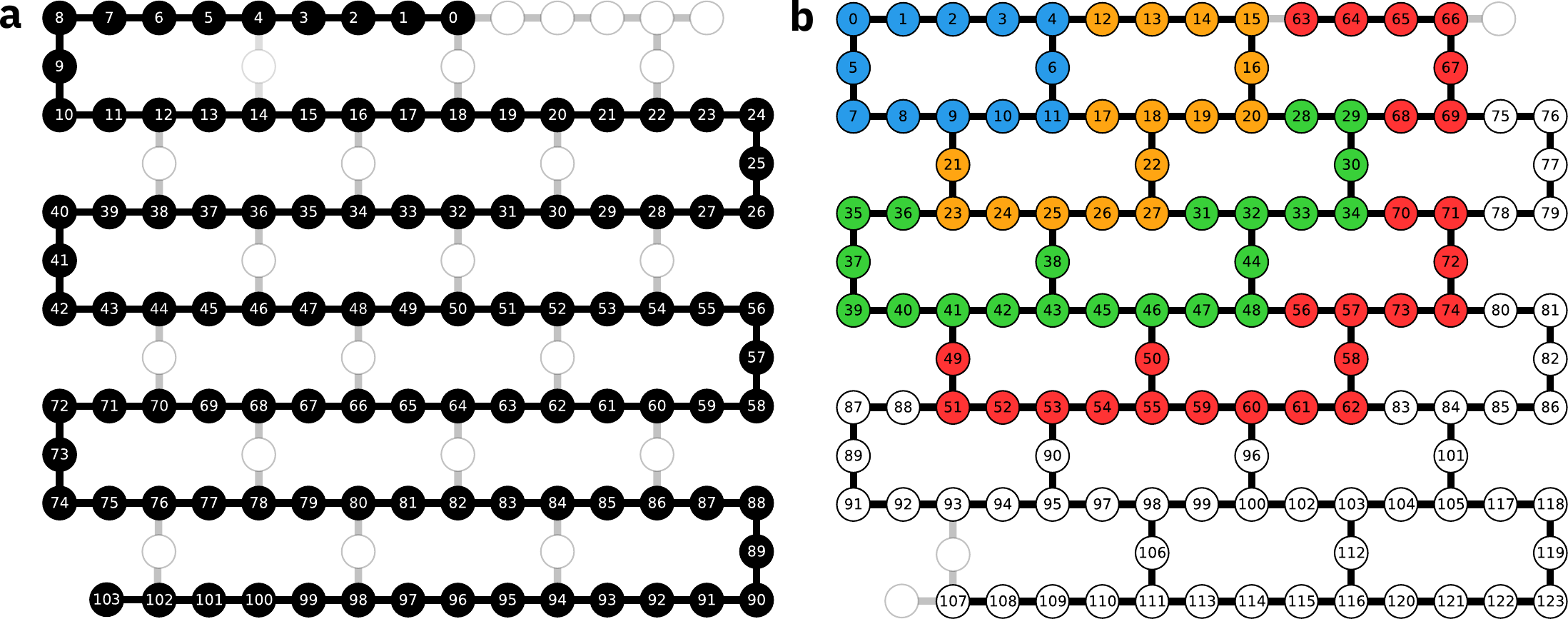}
\caption{
 \textbf{Mapping virtual to physical qubits in the quantum processor.} 
 The figures depict the topology of the 127-qubit quantum processors~\texttt{ibm\_washington} overlaid with virtual qubit embeddings used in our experiments. Qubits used in the virtual to physical mapping are shaded, colored, and numbered, while the remaining qubits are represented as open rings.
\textbf{a.} Schematic of the 104-qubit 1D chain used for studying spin imbalance.
\textbf{b.} Schematic of the 124-qubit 2D lattice. The various colors of the qubits illustrate the growth in system size for the OPDM experiments. Specifically, blue indicates the smallest system (1-hexagon), orange indicates the additional qubits for a 3-hexagon system, green represents the increase to a 6-hexagon system, and red denotes the additional qubits used for a 10-hexagon system.
}
\label{fig:logicalphysicalmapping}
\end{figure*}

\begin{figure}[t]
\centering
\includegraphics[width=0.35\textwidth]{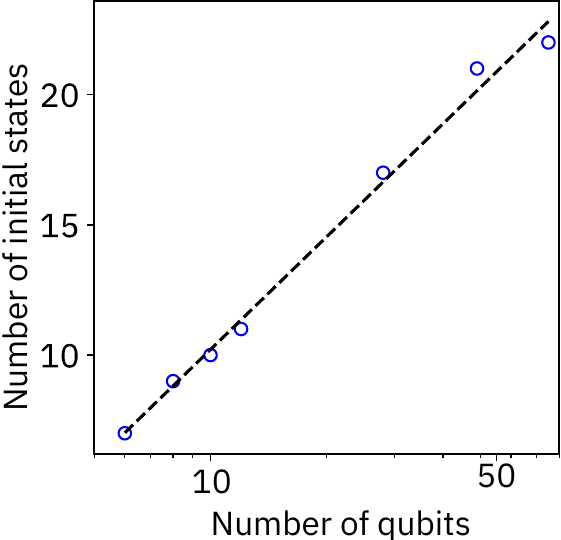}
\caption{ 
\textbf{Circuit sampling complexity for one-particle density (OPDM) reconstruction.} 
The number of initial states shown on a logarithmic scale needed to evaluate all matrix elements of the OPDM as a function of the number of qubits, for a single disorder realization. The relationship exhibits a close match to the logarithmic scaling curve, depicted as a dashed line.
}
\label{fig:OPDMmeasurements}
\end{figure}

\subsection{The algorithm}

Our selected model promotes an efficient use of IBM quantum hardware. In particular, each Floquet step requires only gate depth of two, measured in two-qubit gates, for one dimensional (1D) circuits. In two dimensional circuits (2D), the required gate depth is three. This is in contrast to a generic two-qubit gate which may require a depth up to six (or nine), as in the case of the first-order trotterized Heisenberg model.

The coupling between qubits is mediated by $U(\theta) = U_3 (\theta,0,\pi)$ gates (as defined by OpenQASM 3.0 convention). The coupling angle $\theta$ ranges from 0 to $\pi$, with $\theta\approx0.16\pi$ marking the finite-size MBL phase transition. Each $U(\theta)$ gate can be decomposed into two hardware native $SX$ gates and two virtual $RZ$ gates. 

After each layer of $CZ$ and $U(\theta$) gates, a phase gate $P(\phi_i)$ is implemented, representing a disorder $\phi_i$ for the qubit $i$. The disorder phases $\phi_i$ are randomly drawn from a uniform distribution between $-\pi$ and $\pi$. The total number of two-qubit $CX$ gates can be calculated as $(n-1)D$, where $n$ is the number of qubits and $D$ is the number of Floquet cycles. For example, for the 1D spin imbalance measurement on 104 qubits with $N_\mathrm{Floquet}=19$ Floquet cycles, this amounts to 1,957 gates in total.

The 2D circuits are similar to the 1D circuits, except that due to the connectivity of the 2D heavy hexagonal lattice, each Floquet cycle requires 3 layers of $CZ$ and $U(\theta)$ gates, where the layers correspond to the colors of the connections between qubits in Fig.~\ref{fig:2dsystem}a in the main text. The total number of gates is $N_\mathrm{edges}\times N_\mathrm{Floquet}$, e.g. the quantum circuit for the 2D spin imbalance measurement on 124 qubits with $N_\mathrm{edges}=139$ and $N_\mathrm{Floquet}=19$ Floquet cycles has 2,641 $CX$ gates.

\subsection{Initialization and measurement protocols}

Our experimental approach involves measurements aimed at reconstructing three distinct types of objects: spin imbalances, one-particle density matrices (OPDMs), and local integrals of motion (LIOMs). To obtain these, we employ the three-stage procedure on a quantum computer: state initialization, Floquet evolution, and the actual measurement process. In the following, we detail the methods used to measure each of these quantities, providing a comprehensive overview of our approach.
\\ 

\textbf{Spin imbalance}. 
Spin imbalance is defined to be
\be\label{eq:charge_imbalance}
\mathcal I \coloneqq \frac{n_1-n_0}{n_1+n_0}\;,
\ee
where $n_0$ and $n_1$ are the average occupation of the sites initialized in the zero and one states, respectively.
All spin imbalance measurements were performed on \texttt{ibm\_washington}. The initial layout for the 1D chain is  shown in Fig.~\ref{fig:logicalphysicalmapping}a. We initialize the system in a state where each qubit is set to the $|0\>$ state. Next, we set certain qubits (see Fig.~\ref{figs:cdw_state}) to the $|1\rangle$ state using an $X$ gate composed of two hardware-native and low-error (typically $\sim 3 \cdot 10^{-4}$ as measured by fidelity) $\sqrt{X}$ gates. 

For each value of the coupling angle ($\theta = 0.1\pi$ and $\theta = 0.3\pi$), we independently run 20 circuits with the number of Floquet cycles ranging from 0 to 19. For each cycle, at the end of the evolution, we measure the spin imbalance in Eq.~\eqref{eqs:imbalance} based on the expectation values of the qubits in the computational basis. Since qubits can be measured simultaneously, each spin imbalance measurement requires a total of 20 different circuits. Each circuit is run for 8,000 shots to collect the necessary statistics for the expectation. \\

\textbf{One-particle density matrix (OPDM).} All measurements are performed on \texttt{ibmq\_kolkata} (1D) and \texttt{ibm\_washington} (2D). We use quantum hardware to obtain the matrix elements of Eq.~\eqref{eq:opdm}, which can be expressed as
\be
\begin{split}
\rho_{ij} = &\frac{1}{4}\Bigl(\<X_iX_j\>_c-i\<X_iY_j\>_c+i\<Y_iX_j\>_c+\<Y_iY_j\>_c\Bigl),
\end{split}
\ee
where $\<AB\>_c = \<AB\>-\<A\>\<B\>$ is the connected correlation function, and $\<A\> := \<\psi_0|U_F^{d\dag}AU^d_F|\psi_0\>$ is the expectation of the operator with respect to the time-evolved initial ordered state $|\psi_0\>$.

For both, the 1D and 2D cases, we start by initializing the system in the geometry-dependent CDW state and then evolve for 10 Floquet cycles. For both cases, we measure qubits in the computational, or $Z$, basis, as well as various combinations of the $X$ and $Y$ bases. The number of combinations should be sufficient to determine all expectations of the operators $\<X_i\>$ and $\<Y_i\>$, as well as $\<X_iX_j\>$, $\<Y_iY_j\>$, $\<X_iY_j\>$, and $\<Y_iX_j\>$. It is important to note that measuring $O(\log(n))$ random combinations of $X$ and $Y$ bases is sufficient for this purpose \cite{elben2023randomized}. We further develop a simple optimization algorithm that requires a smaller number of multiqubit bases compared to random combinations. The sample complexity of the resulting measurement sequence is illustrated in Fig.~\ref{fig:OPDMmeasurements}.

For the one-dimensional circuits, we perform measurements in $\{7,9,10\}$ multi-qubit Pauli basis combinations for system sizes of $\{6,8,10\}$ qubits each, using a single initial state $|10...10\rangle$. The measurements are performed for coupling angles $\theta/\pi=\{0, 0.05, 0.10, 0.15, 0.20, 0.25\}$ and 100 disorder realizations, resulting in a total of 15,600 circuits. For the 2D OPDMs, we use $\{11, 17, 21, 22\}$ multi-qubit Pauli basis measurements for lattices with $\{1,3,6,10\}$ heavy hexagons (or $\{12, 28, 49, 75\}$ qubits) for the CDW initial state for the same set of coupling angles. The layout is shown in Fig.~\ref{fig:logicalphysicalmapping}b. The number of disorder realizations for system size containing $n$ qubits is calculated as $\lfloor 625/n\rfloor$, resulting in a total of 8,244 circuits.\\ 

\begin{figure*}[t!]
\centering
\includegraphics[width=1
\textwidth]{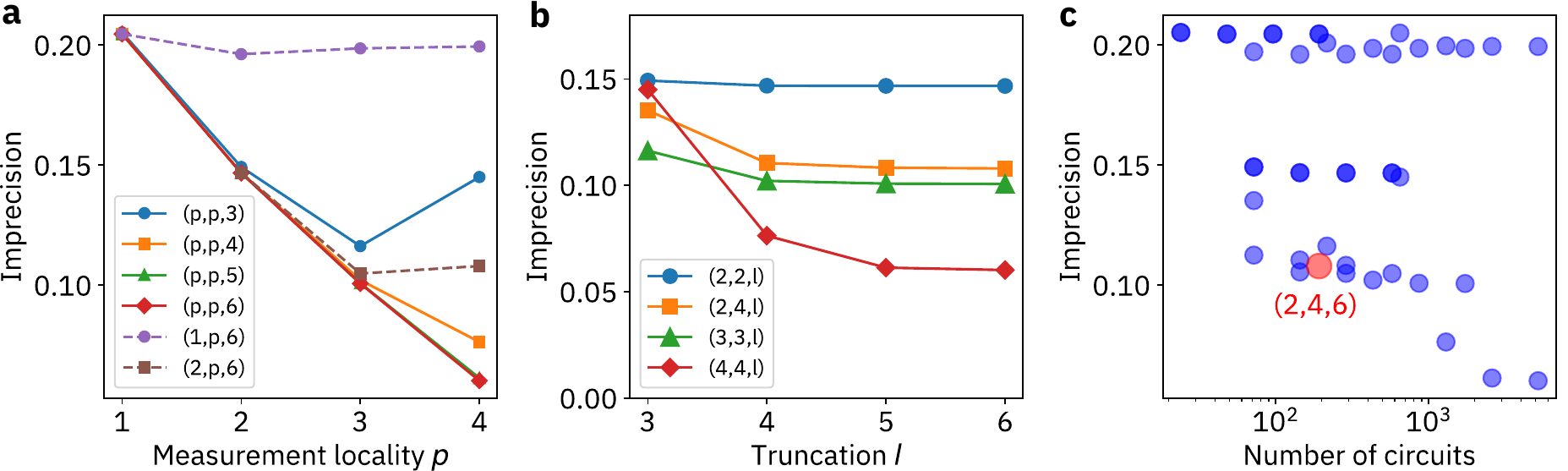}
\caption{\textbf{Cost-benefit analysis for LIOM reconstruction in the presence of noise.} This analysis uses a noiseless 10-qubit chain simulation for cycle depth $D=9$ with $L^{(0)}$ chosen as the Pauli $Z$ operator acting on the central qubit $5$ (qubits are numbered starting with 1). The results are averaged over 50 disorder realizations. \textbf{a.} The LIOM imprecision in Eq.~\eqref{eqs:operator_error} for various settings as a function of the maximum operator support. Solid lines represent the $(p,p,l)$ setting, dashed lines represent the $(p,k,l)$ setting. Increasing the locality leads to a drastic improvement in imprecision. \textbf{b.} The dependence of the error on the parameter $l$, which sets the number of initial states. \textbf{c.} The trade-off between the method error and the required quantum computational resources for $(p,k,l)$ schemes, where the dots represent different combinations of the settings $1\leq l\leq 4$, $1\leq p\leq 4$, and $1\leq l\leq6$. The red dot indicates the choice of the experimental setting ($(2,4,6)$).}
\label{fig:best_model}
\end{figure*}

\begin{figure*}[]
\centering
\includegraphics[width=1.0\textwidth]{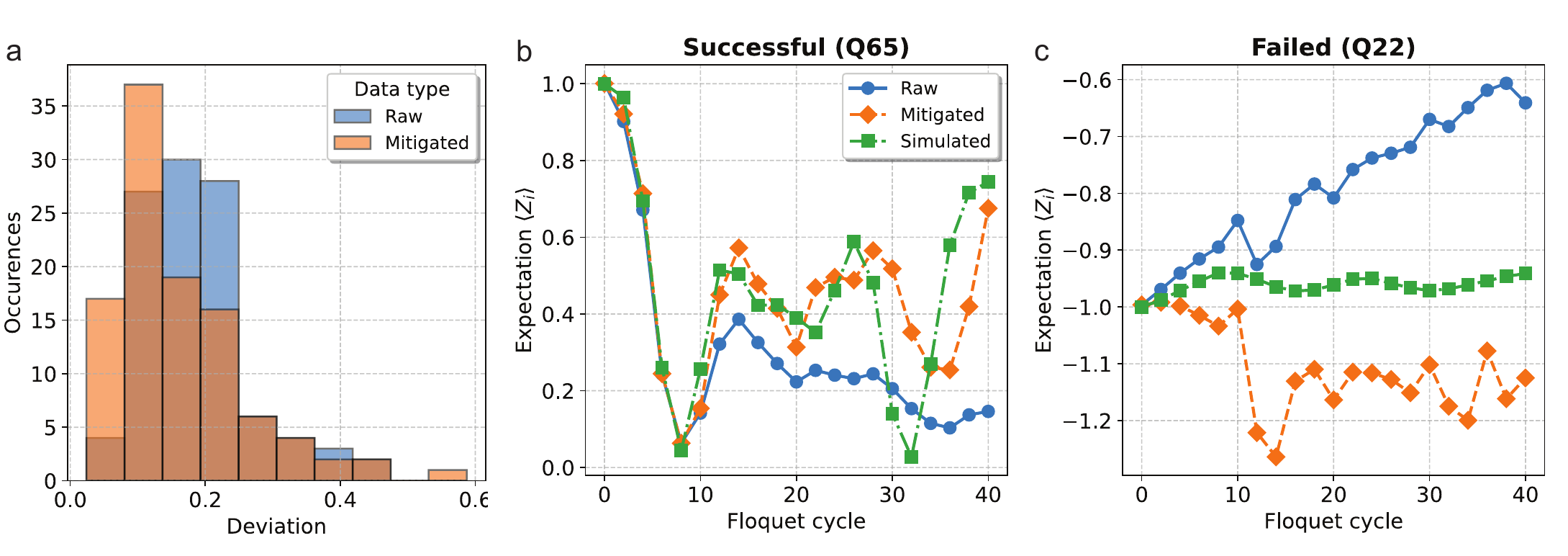}
\caption{\textbf{Benchmarking error mitigation.} This example illustrates the effect of error mitigation on the precision of quantum evolution for local expectations, $z_i(t) := \langle Z_i(t) \rangle$, using data from the \texttt{ibm\_kyiv} device, corresponding to Fig.~\ref{fig2:1dlocalization}b and c for $\theta = 0.1\pi$. \textbf{a.} The overall improvement in the signal $z_i(t)$ (both raw and mitigated) is quantified by the deviation of the time-dependent curves from the classical simulation of noiseless system $z_i^{\rm class}(t)$, i.e. $D_i = \sqrt{\mathbb{E}_t \left[z_i(t)^2 - z_i^{\rm class}(t)^2\right]}$, where the expectation is taken over the measurement cycles $t \in \{0, 2, 4, \dots, 40\}$. \textbf{b.} An example of successful error mitigation for qubit 65. The mitigated signal (diamonds, dashed lines) closely matches the classical simulation (squares, dash-dotted lines), compared to the original raw data (circles, solid lines). \textbf{c.} An example of failed error mitigation for qubit 22. The mitigated curve deviates significantly from the noiseless classical simulation and falls below the minimum physical value, $z_{\rm min} = -1$. Both successful and failed samples are included in the final results.}
\label{fig:benchmarkmitigation}
\end{figure*}

\begin{figure*}[t!]
\centering
\includegraphics[width=1
\textwidth]{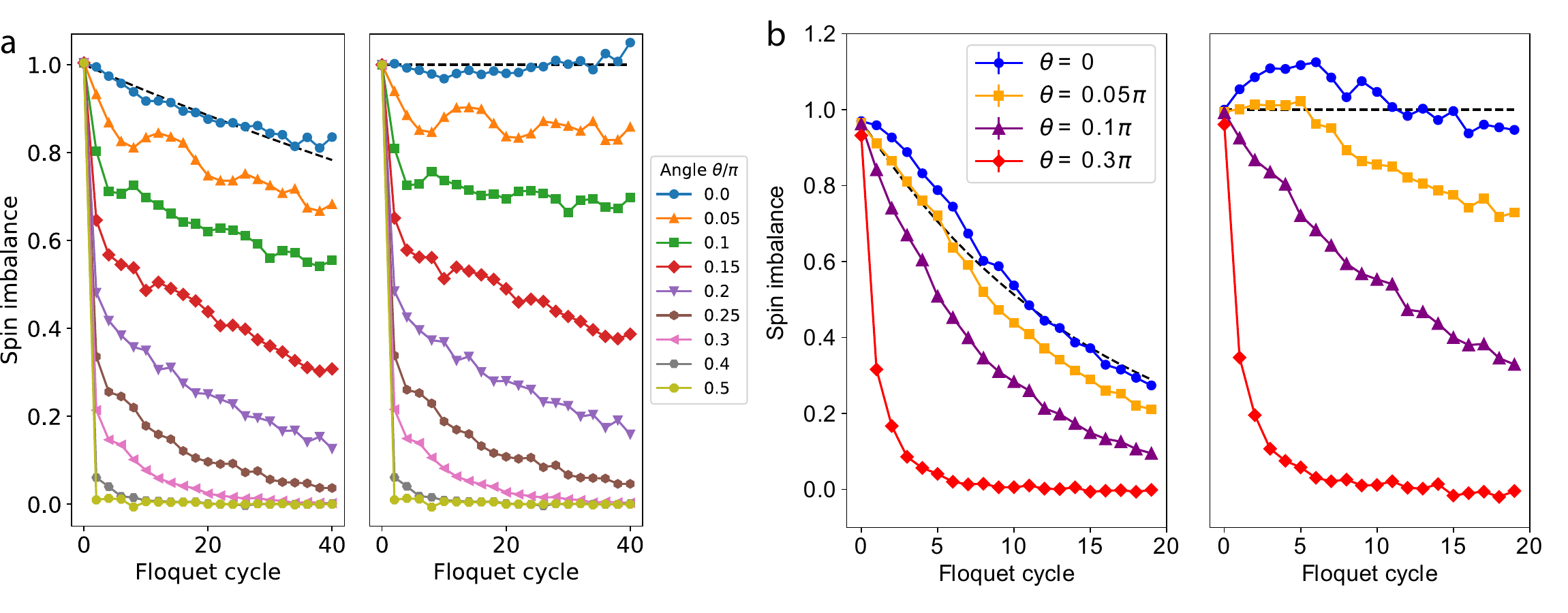}
\caption{ 
\textbf{Rescaling the charge imbalance.}  
\textbf{a.} The non-normalized (left) and the normalized (right) imbalance for a one-dimensional 104-qubit chain, also shown in Fig~\ref{fig2:1dlocalization}c. The non-normalized curves correspond to the error-mitigated results obtained from the device using Eq.~\eqref{eq:charge_imbalance}. The normalization is performed by the average slope of the $\theta=0$ curve (dashed curve). \textbf{b.} Similar curves for the 124-qubit hexagonal circuit.
}
\label{figs:full_cdw_dynamics}
\end{figure*}

\begin{figure}[t!]
\centering
\includegraphics[width=0.4
\textwidth]{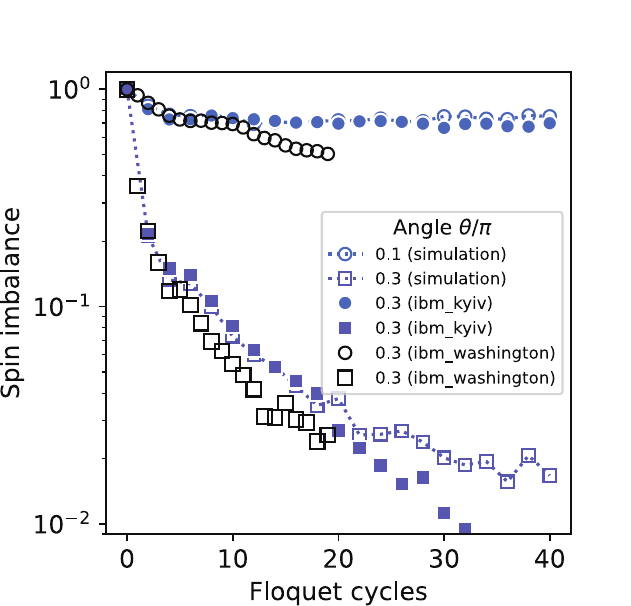}
\caption{ 
\textbf{Noise effect on spin imbalance.} 
The plot shows the relative comparison between classical numerical simulation (blue empty points, dashed lines), experiment on \texttt{ibm\_kyiv} (filled points) and \texttt{ibm\_washington}. The difference between the devices can be potentially explained by the noise effect.
}
\label{figs:compare_cdw_dynamics}
\end{figure}

\begin{figure*}[]
\centering
\includegraphics[width=1
\textwidth]{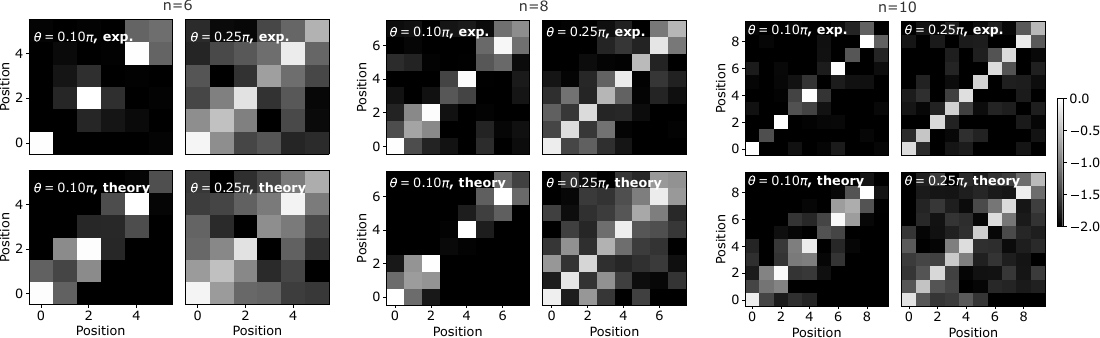}
\caption{\textbf{One-particle density matrices (OPDMs) in 1D lattices.} The color plots show the absolute values of the matrix elements $|\rho_{ij}|$ in Eq.~\eqref{eq:OPDM} for 1D chains with $n = 6$, 8, and 10 qubits after $10$ Floquet cycles.  The top panels show experimental data, while the bottom panels show theoretical predictions. The color scale is logarithmic, and the black color represents values smaller than $10^{-2}$.
}
\label{figs:1dOPDM_colorplots}
\end{figure*}

\begin{figure*}[t!]
\centering
\includegraphics[width=0.8
\textwidth]{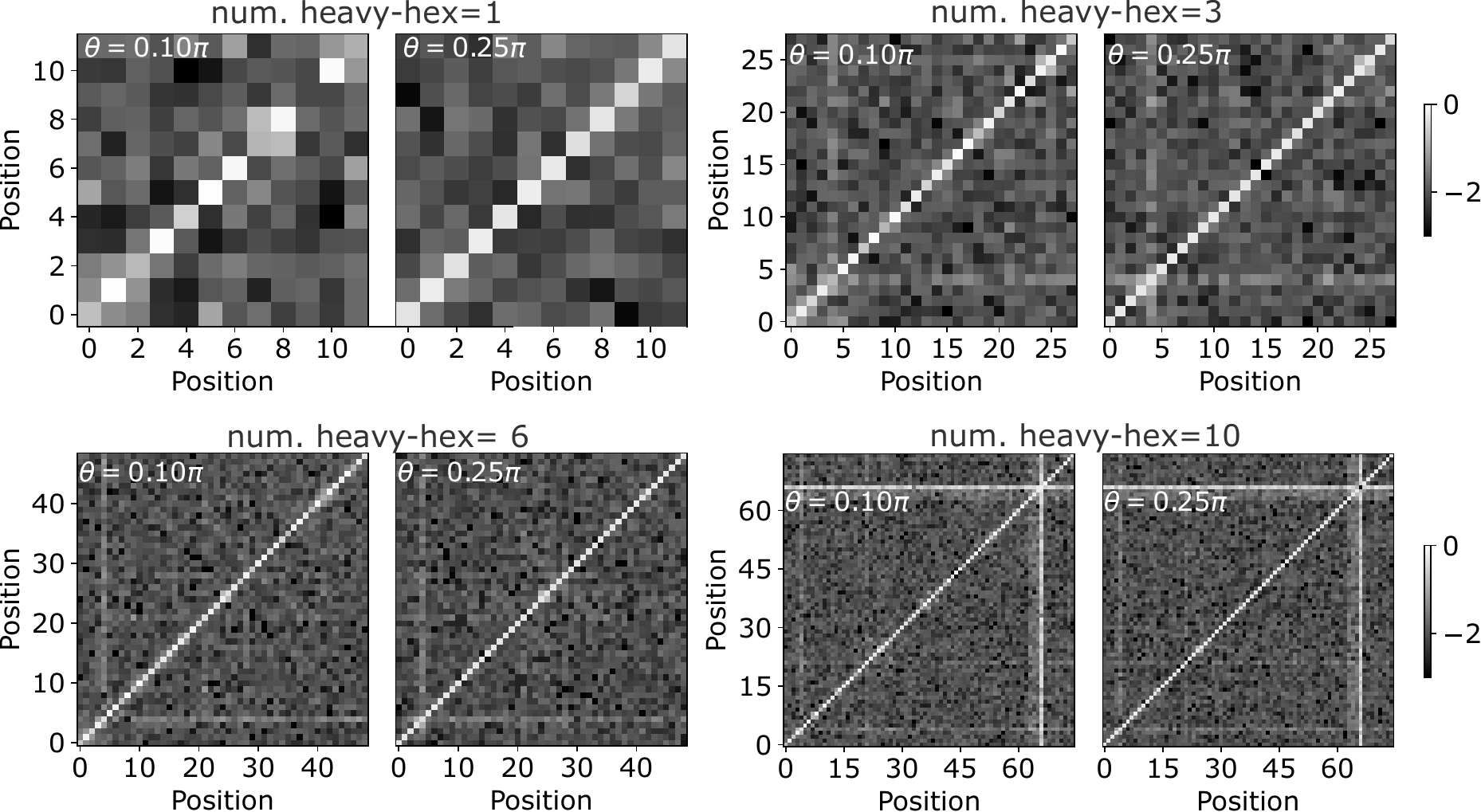}
\caption{\textbf{One-particle density matrices (OPDMs) in  2D lattices.}. We consider circuits containing 1, 3, 6, and 10 heavy hexagons for a state after $10$ Floquet cycles. The anomaly at position 66 for the OPDM for the 10 heavy hexagon circuit is due to a high readout error on this qubit. In the gap ratio and discontinuity shown in Figs.~\ref{fig:2dsystem}d and \ref{fig:2dOPDM_disc} respectively, we remove this entire row and column before calculating the spectrum. The color scale is logarithmic, and the black color represents values smaller than $10^{-3}$.
}
\label{figs:2dOPDM_colorplots}
\end{figure*}

\begin{figure}[]
\centering
\includegraphics[width=0.35\textwidth]{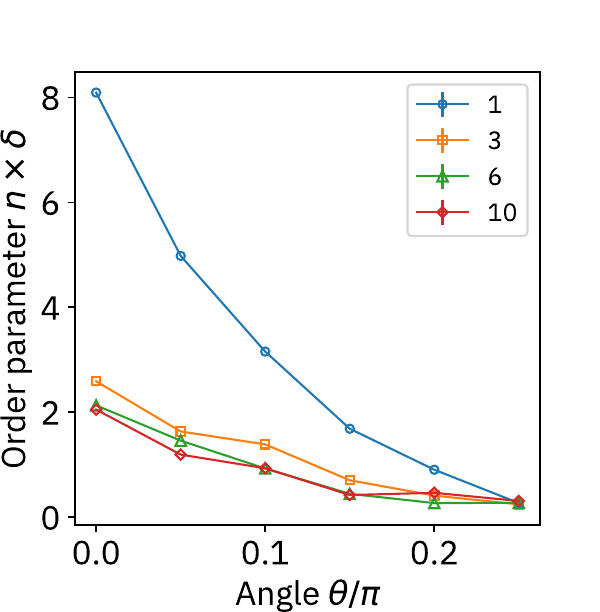}
\caption{\textbf{Discontinuity in the one-particle density matrix (OPDM) in 2D lattices.} The plot shows the rescaled discontinuity $n\delta$ multiplied by the number of qubits $n$ for two-dimensional circuits containing 1 (blue), 3 (orange), 6 (green), and 10 (red) hexagons. The parameter saturates to a system-independent value for more than one hexagon (see discussion in the text).}
\label{fig:2dOPDM_disc}
\end{figure}

\textbf{Local integrals of motion}. All measurements are performed on \texttt{ibm\_washington} (both 1D and 2D). To obtain the value of the correlation function in Eq.~\eqref{eqs:corr_funct_approx}, we measure the quantities
\be\label{eqs:f_parameter}
f_{\mu d \alpha} = \<\psi_\alpha|U_F^{d\dag}P_\mu U^d_F|\psi_\alpha\>,
\ee
where $|\psi_\alpha\>$ denotes the initial states, $P_\mu\in Q$ denotes a collection of local Pauli matrices that are part of a predefined set $Q$. Similar to previous experiments, the desired result can be achieved by running multiple runs of a quantum circuit. This process involves initializing the system in the state $|\psi_\alpha\>$, passing it through $d$ Floquet cycles, and then performing measurements of Pauli operators $P_\mu$. A major challenge in this procedure is to determine the optimal set of initial states $|\psi_\alpha\>$ to reliably estimate the value of $W_D(P_\mu, L^{(0)})$ in Eq.~\eqref{eqs:corr_funct_approx}.

The simplest and scalable method for selecting initial states is to assign them as random bit strings, using the Monte Carlo method. While this method is scalable (requiring $O(1)$ states to estimate the expectation of local operators), it is not an optimal use of resources for shallow circuits. In fact, to achieve an accuracy of $1\%$, we need about $10^4$ random initial states. Instead, we use a technique that allows us to achieve comparable accuracy, but with two orders of magnitude fewer initial states.

In one dimension, we choose a family of periodic states $|\psi_\alpha\> = |w_\alpha\>\otimes|w_\alpha\>\otimes \dots \otimes|w_\alpha\>$, where $|w_\alpha\>$ represents possible combinations of $l$ bit strings. The number of such states is only $2^l$. At the same time, for any evolution $U^d_F$ that transforms a local operator into an operator with support on $l$ qubits or less, expectation over this set of initial states is as effective as over all $2^{n}$ states.

For the two-dimensional heavy hexagonal lattice configuration, we find a similar suitable set of states. We first represent the Floquet cycle as a graph with vertices representing the qubits and the edges corresponding to next-neighbor connections. Next, we construct a randomized search algorithm that colors the graph using $l$ unique colors such that it minimizes the number of colors within $r$ steps of each qubit. The result is equivalent to the initial set of $2^l$ states, where each uniquely colored qubit represents an independently chosen value 0 or 1. As in the 1D case, the precision can be improved by choosing larger $l$.

Next, we develop a measurement protocol to get the values of the Pauli operators $P_\mu$. In one dimension, we choose to measure qubits at positions $i$, $i+p$, $i+2p$, and so on, in the same basis chosen from $X$, $Y$, or $Z$ for some integer $p\geq 1$. This pattern results in $3^p$ measurement combinations, which allows us to measure \textit{all local} $4^p - 1$ Pauli operators $P_\mu$ with support $p$. In addition, the protocol allows to measure \textit{some} of the Pauli operators with support $k>p$. We then measure all such Pauli operators and include them in the set $Q$.

To choose the Pauli measurement in two dimensions, we return to the graph representation of the Floquet cycle. We then use an algorithm that solves a graph coloring problem using $p$ colors so that each path containing $k$ qubits contains as many different colors as possible. Each such color corresponds to an independently chosen qubit measurement basis. This allows to measure all $k$-local Pauli operators if $p = k$, and fraction of them otherwise. For $p=2$, the best choice is the antiferromagnetic coloring shown in Fig.~\eqref{figs:cdw_state}b.

Combining the initialization protocol and the measurement protocol, we obtain a method we call the $(p,k,l)$ scheme. For this scheme, we use $2^l$ initial state combinations and $3^p$ Pauli measurement combinations to obtain all possible expectations of Pauli operators with support up to $k$. For any such scheme, the total number of unique quantum circuits required to run is $N_{\rm circ} = D\times 2^l\times 3^p$, where the factor of $D$ comes from the fact that we need to perform the measurements for each depth instance $d\in\{0,\dots,D-1\}$.

The analysis of different schemes in 1D, including their effect on the imprecision and the number of circuits required, is shown in Fig.~\ref{fig:best_model}.  For example, Fig.~\ref{fig:best_model}a and Fig.~\ref{fig:best_model}b show how the LIOM imprecision is reduced by increasing $p$ and $l$. While this reduction can be significant, the number of circuits grows exponentially with these parameters. To make the best choice, we perform the cost-benefit analysis shown in Fig.~\ref{fig:best_model}c. We found that the best scheme to achieve high precision with a reasonable number of circuits that exhibits sufficiently wide LIOMs is the $(2,4,6)$ scheme (marked by the red dot). In our experiment shown in Figs.~\ref{fig:1dLIOMs} and \ref{fig:2dsystem} we use this scheme including $N_{\rm circ} = 10\times64\times 9=5,760$ of non-mitigated circuits in total. This allows us to evaluate all values in Eq.~\eqref{eqs:f_parameter} for all two-qubit Pauli operators and some three- and four-qubit Pauli operators. We use the same scheme and parameters for the two-dimensional setting, although we are unable to estimate the precision classically.\\

\begin{figure*}[t!]
\centering
\includegraphics[width=1
\textwidth]{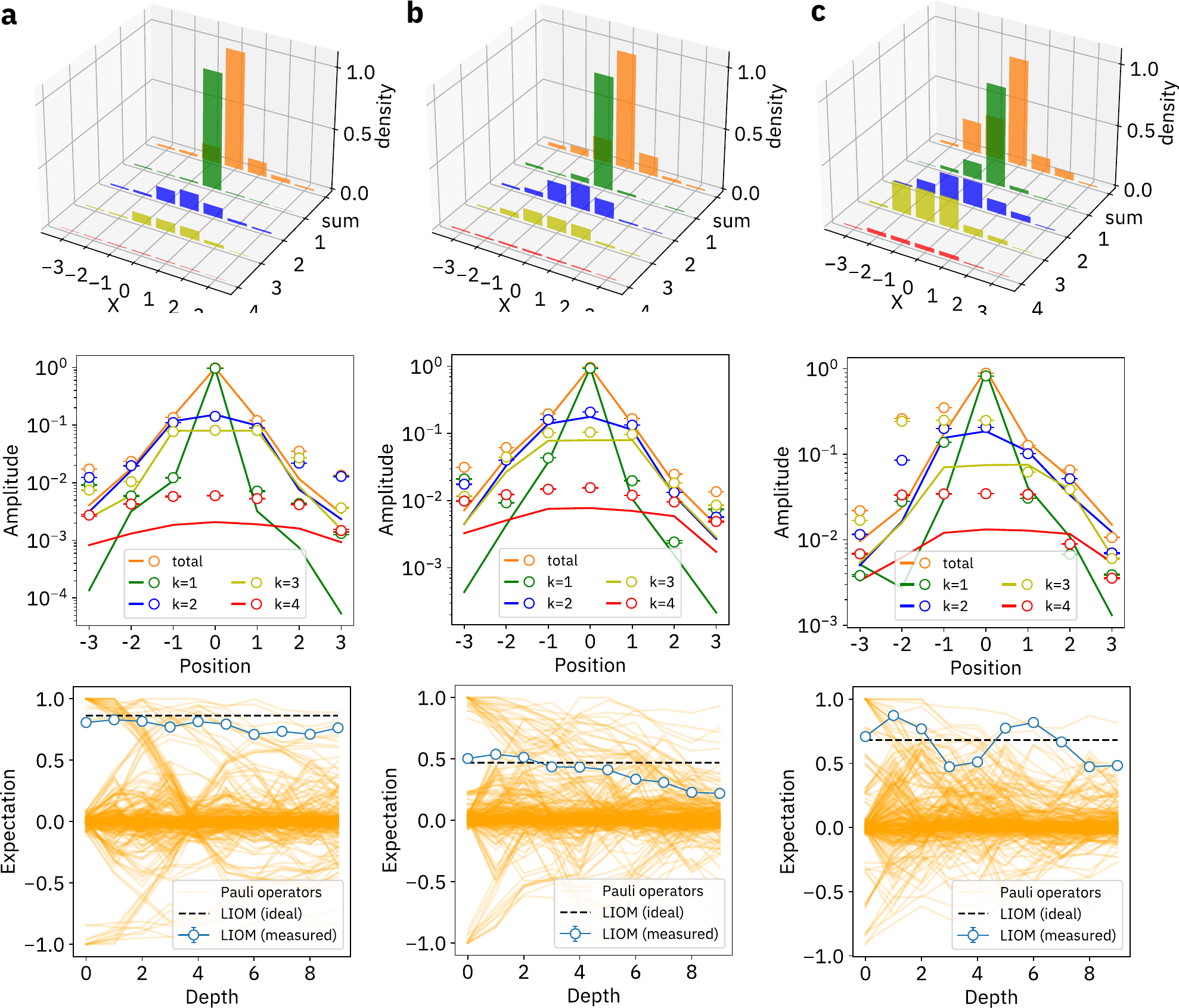}
\caption{
\textbf{Examples of local integrals of motion (LIOMs) with varying levels of error.}
The plot illustrates three LIOMs with different levels of imprecision: the lowest error (panel a, $\epsilon=0.065$), the median error (panel b, $\epsilon=0.135$), and the highest error (panel c, $\epsilon=0.357$). 
The top subpanels display the density of the LIOMs, while the middle subpanels present a logarithmic scale plot comparing the LIOM to the theoretical curve derived from the limited-size numerical simulation.
The bottom subpanels depict the time dependence of the involved Pauli operators (orange curves) and the expectation value of the LIOM itself. 
The dashed line represents the ideal behavior of the LIOM for infinite-depth evolution, as derived from the limited-size numerical simulation. 
All numerics use 12-qubit sub-regions to approximate the behavior of the LIOMs in the larger 104-qubit system.
\label{figs:LIOM_examples}
}
\end{figure*}

\begin{figure*}[t!]
\centering
\includegraphics[width=1\textwidth]{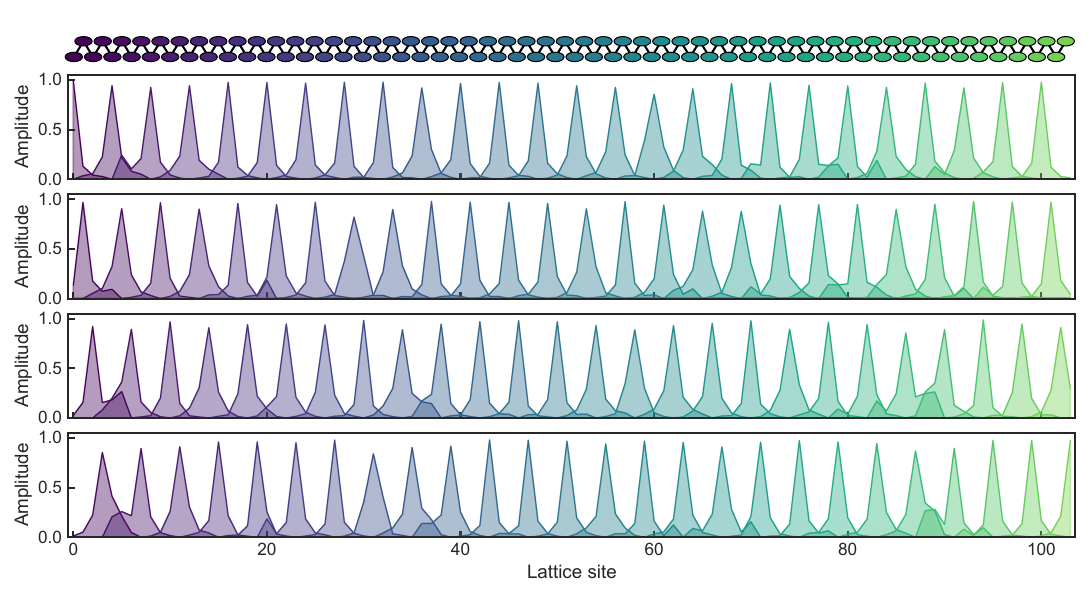}
\caption{
\textbf{Operator density representation for 104 LIOMs in a 1D spin chain.} This graph provides an experimental reconstruction of operator amplitudes for each of the 104 Local Integrals of Motion (LIOMs) in a 104-qubit chain, depicted schematically above the topmost plot. For improved readability and to prevent overlap, the LIOMs are distributed across four axes to enhance visual spacing. More specifically, the topmost axis displays the LIOMs at sites $i=0,4,8,\ldots$, the second from the top axis shows the LIOMs at sites $i=1,5,9,\ldots$, and so forth. The y-axes represent the operator spatial amplitudes~$W(x) = \sqrt{\sum_{\mu} a_{\mu}^2 w_{x\mu}}$, where $a_{\mu}$ is the reconstructed coefficients for the Pauli $\mu$ operator in a LIOM decomposition and $w_{x\mu}$ is the weight associated this Pauli operator at site~$x$.
\label{figs:LIOM_all_1D}
}
\end{figure*}

\begin{figure*}[t!]
\centering
\includegraphics[width=1
\textwidth]{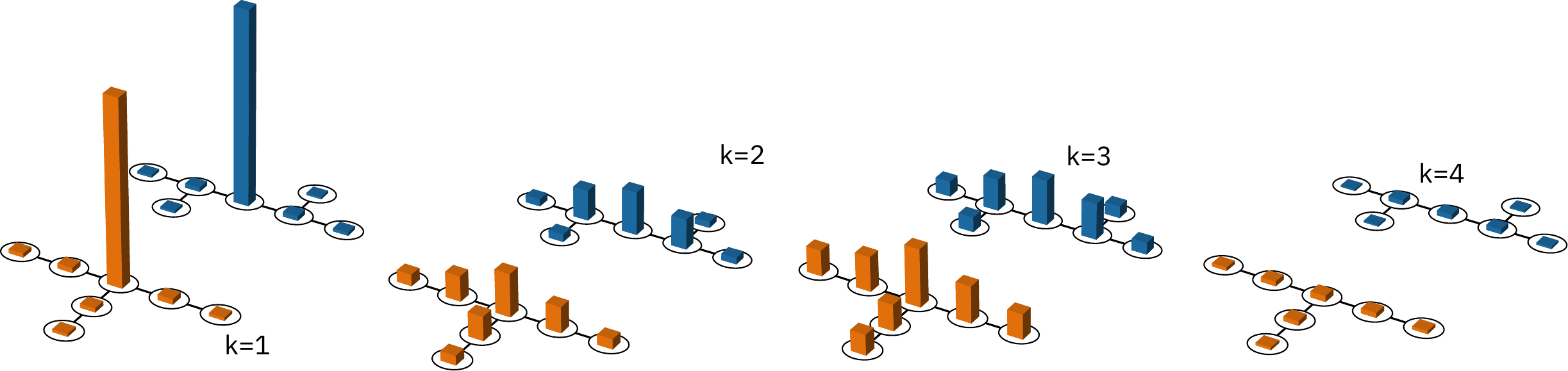}
\caption{
\textbf{Structure of local integrals of motion (LIOMs) in a 2D lattice.} 
The figure illustrates the decomposition of the ``average" LIOM into contributions from $k$-local Pauli operators, for $k$ in the range from 1 to 4. The blue bars represent the two-vertex configuration, while the orange bars represent the three-vertex configuration.
}
\label{figs:LIOM_2d_decomposition}
\end{figure*}

\subsection{Error mitigation}

Quantum error mitigation \cite{Ci2023E} describes strategies aimed at mitigating the detrimental effects of noise in quantum computations. Two popular and primary strategies are zero-noise extrapolation (ZNE) \cite{Li2017ZNE, Temme2017PECandZNE} and probabilistic error cancellation (PEC) \cite{Temme2017PECandZNE, Berg2022, Berg2022TREX}, which have recently been combined in \cite{Kim2023}.

Here, we developed a tailored, composite error-suppression and mitigation strategy that includes logical-to-physical qubit mapping~\cite{Nation2023Mapomatic}, M3 readout mitigation~\cite{Nation2021M3}, Pauli twirling~\cite{Wallman2017Twirling}, and ZNE~\cite{Li2017ZNE, Temme2017PECandZNE, majumdar2023best}. 
This strategy is straightforwardly scalable to large system sizes and accessible \textit{via} open-sourced packages offered through the Qiskit ecosystem. 

First, virtual qubits in the quantum circuit are mapped to physical qubits on the device. In the case of the 104-qubit 1D chain, the 124-qubit 2D lattice, and the 2D OPDM measurements, there are limited possibilities for this mapping. 
For the 1D OPDMs measured on 6, 8, or 10 qubits, the best qubits for a given circuit on a given device are selected using the Qiskit package \texttt{mapomatic}~\cite{Nation2023Mapomatic}. Based on the connectivity of the transpiled circuit, \texttt{mapomatic} finds possible mappings between virtual and physical qubits of a given device, ranks them based on a simple cost function that takes into account the characterized errors of each qubit and gate, and selects the best one. 

In order to mitigate coherent errors during runtime, we employ a combination of two strategies. The first strategy involves applying a technique known as ``twirling" to the noise present in the two-qubit gate layers of the circuits \cite{Wallman2017Twirling}. 
For a tutorial on twirling, see Ref.~\cite{Minev2022twirl}.
This approach entails the addition of random single-Pauli gates before and after the noisy two-qubit gates. These gates serve to ``mix up" the noise, helping to reduce its impact. Notably, the errors associated with the two-qubit gates are typically an order of magnitude larger than those of the single-qubit gates.
We randomly sample the twirled circuits $N_\mathrm{twirl}=5$ times and run each twirled circuit with $1/N_\mathrm{twirl}$ of the shots for the original circuit. The second strategy is dynamical decoupling (DD), which can suppress crosstalk errors \cite{Viola1999DD, Jurcevic2021}. Here, we insert evenly spaced XY4 pulses in idle periods for 1D single particle density matrix measurements, but do not use DD for spin imbalance and local integrals of motion measurements on larger systems as the brickwork quantum circuits have little idle time.

We also mitigate readout and incoherent errors by running additional circuits. Readout errors are mitigated on the noisy bitstring-level with the quasi-probability-based \texttt{M3} method \cite{Nation2021M3}. The ``balanced" version of the method requires $2N_\mathrm{qubit}$ readout calibration circuits to be run as opposed to the full matrix-inversion method that requires $2^{N_\mathrm{qubit}}$ calibration circuits~\cite{Bravyi2021ReadoutMatrixInversion}, which is infeasible for the system sizes considered here. We empirically find that the readout calibration need only be repeated every hour or so to maintain the performance of readout mitigation. Because many of the circuits---over many Floquet cycles, measurement bases, and initial states---in this study are run on the same qubits, running the readout calibration circuits incurs only a small overhead. Incoherent errors are mitigated with digital zero-noise extrapolation (ZNE)~\cite{Li2017ZNE, Temme2017PECandZNE}.
In dZNE, additional digital quantum gates that resolve to identity are inserted into the circuit to amplify the noise. Expectation values are computed at multiple noise factors, where a noise factor of 1 corresponds to the original circuit and, \textit{e.g.}, a noise factor of 3 corresponds to appending the inverse of the circuit followed by the circuit itself. Then, the zero-noise expectation value is estimated by extrapolating from the expectation values at these non-zero noise factors. In this study, we use noise factors of 1 and 3, and we use linear extrapolation, which generally underestimates the magnitude of observables but is more stable than exponential extrapolation.

We quantify the improvement in noisy expectation values in Fig.~\ref{fig:benchmarkmitigation} with and without quantum error mitigation on \texttt{ibmq\_kyiv} for a 104-qubit Floquet circuit with $\theta=0.1\pi$ and find that error mitigation allows the noise error to be reduced. More information on the performance of error mitigation in this regime for \texttt{ibm\_washington} can be found in Ref.~\cite{majumdar2023best}. For example, Fig.~9 there shows the simulation of a 104-qubit 1D chain with $\theta=0$ using a nearly identical error mitigation protocol -- the only difference being that noise factors of 1.0, 3.0, and 5.0 are used instead of just 1.0 and 3.0. Finally, error due to shot noise propagated through the error mitigation pipeline and data post-processing is quantified via bootstrapping over the bitstring counts.

Overall, this composite error mitigation strategy involves minimal additional classical compute to twirl the circuits and insert DD pulses, as well as minimal overhead of quantum resources---the total shot count of the entire composite strategy is just twice that of the unmitigated execution due to ZNE.

\begin{figure*}[t!]
\centering
\includegraphics[width=0.75\textwidth]{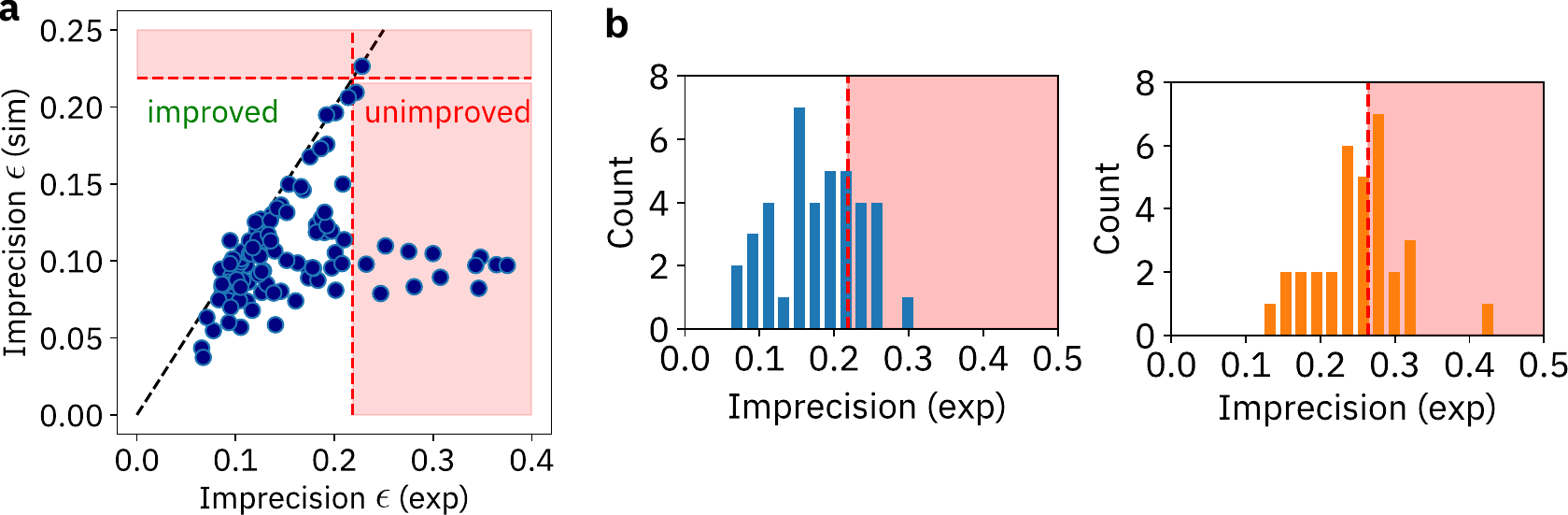}
\caption{
\textbf{Local integrals of motion (LIOMs) error due to noise.} \textbf{a.} Imprecision $\epsilon$ of the 1D LIOMs in the noiseless simulation (y-axis) and in the experiment (x-axis) for $\theta = 0.1\,\pi$. Red-shaded regions indicate error values exceeding the $\epsilon_0$ imprecision of the initial estimates~$L^{(0)}_k=Z_k$. \textbf{b.} Imprecision for 2D LIOMs. Two histograms show the distribution of $\epsilon$ imprecision for two-vertex (left) and three-vertex (right) prethermal LIOMs. The shaded region indicates imprecision exceeding that of the initial-guess operator.}
\label{fig:noiseerrorLIOMs2}
\end{figure*}

\subsection{Precision of the individual LIOMs}

In Figs.~\ref{fig:1dLIOMs}e and \ref{fig:2dsystem}f in the main text, we quantify the precision of the LIOMs using the localization length. The localization length is calculated for the $k$th LIOM using the definition
\be
\lambda_k = \sum_{\vec{x}} W(\vec{x})d(\vec{x},\vec{x}_0),
\ee
where $w_{k\vec{x}}$ is the $k$th LIOM amplitude at qubit $\vec{x}$ and $d(\vec{x},\vec{x}_0)$ is the Manhattan distance between qubits with coordinates $\vec x$ and $\vec x_0$. The relative error can be expressed as
\be
r_k = \frac{|\lambda^{\rm exp}_k-\lambda^{\rm sim}_k|}{\lambda^{\rm sim}_k},
\ee
where $\lambda^{\rm exp}_k$ and $\lambda^{\rm sim}_k$ are localization lengths derived from experiment and numerical simulation, respectively. There relative errors are shown in Fig.~\ref{fig:2dsystem}e by colors of data points.

We can also quantify the quality of the LIOMs by the normalized error in their commutator with the Floquet cycle unitary, 
\be\label{eqs:epsilon_err}
\epsilon = \|[L,U_F]\|_F/2\|L\|_F,
\ee
where $L$ is the LIOM operator and $\|\cdot\|_F$ is the Frobenius norm. Notably, numerical evaluation of this LIOM imprecision~$\epsilon_k$ is only polynomial in the number of Pauli terms and their locality. Moreover, it is  independent of the number of qubits for large systems. Note that, since we do not estimate coefficients for all Pauli operators, the derived quantity for a truncated LIOM serves as an upper bound on the its precision.
For our initial guess $L^{(0)}_k=Z_k$, the imprecision is~$\epsilon_0 = \sin\theta/\sqrt{2}$ ($\approx 0.2185$ for $\theta = 0.1\pi$) regardless of the qubit index~$k$ and disorder realization. Finding better precision compared to initial guess strongly indicates that the quantum algorithm works.

To distinguish system dynamics and noise effects, Fig.~\ref{fig:noiseerrorLIOMs2}a compares the imprecision of the LIOMs obtained from noise-free simulations (sim) and experiments (exp), yielding an average~$\epsilon = 0.16$. Absent experimental noise, all points would lie on the diagonal. 
Notably, several extracted LIOMs fail (dots in shaded area, representing 13\% of the total), attributed to outlier readout and gate calibration errors on a few qubits. Nonetheless, 87\% of the LIOMs surpass the initial prediction. Overall, noise contributes a 0.05 increase in net average imprecision compared to noiseless simulations, suggesting potential for improvement with enhanced devices and mitigation strategies \cite{Kim2023utility}. In comparison, for ergodic side of the dynamics ($\theta = 0.3\pi$) the initial guess yields $\epsilon_0 = 0.57$ and would not be substantially improved upon applying our method.

In Fig.~\ref{fig:noiseerrorLIOMs2}b, we report the experimental imprecision for each type of bulk LIOM for 2D setting.  Most LIOMs outperform the initial guess (bars on the left of the vertical dashed line) except for outliers associated with qubits experiencing poor performance. In 2D, larger connectivity and localization lengths lead to slower LIOM convergence and higher susceptibility to noise.

\section{Additional data}
\label{more_data}

In this section, we present additional experimental data to support the claims of the paper. Below, we simply list the additional results and provide the accompanying description.

\textit{Additional data on spin imbalance.} Figure~\ref{figs:full_cdw_dynamics} extends the plots presented in Figs.~\ref{fig2:1dlocalization}c and \ref{fig:2dsystem}c by showing non-normalized imbalance values and the dynamics of the circuits for both $\theta = 0$ and $\theta=0.05\pi$. The plots show a significant decay of the imbalance at $\theta=0$, even though the circuit must exhibit no dynamics if applied to a state in the computational basis. This decay can be attributed to the presence of noise. To mitigate the effect of noise, we perform a partial compensation by renormalizing the remaining curves using the average decay curve of $\theta = 0$, which is represented by the dashed black curve. The result of this normalization process can be observed in the right subpanels and is also shown in the main text. Also, Fig~\ref{figs:compare_cdw_dynamics} compares signal from two similar devices: \texttt{ibm\_kyiv} and \texttt{ibm\_washington}.

\textit{Additional data on OPDMs.}  Figure~\ref{figs:1dOPDM_colorplots} provides a visual representation of the OPDM for one-dimensional chains composed of 6, 8, and 10 qubits. This state has evolved over $D=10$ cycles for a specific disorder realization. In addition to the experimental data, the figure also shows theoretical predictions. In the ergodic regime ($\theta=0.3\pi$), the off-diagonal elements become more prominent, while the diagonal elements show a more uniform pattern compared to the MBL regime ($\theta=0.1\pi$). The average spectra resulting from the diagonalized OPDMs are shown in the main text in Fig.~\ref{fig2:1dlocalization}c.

Figure~\ref{figs:2dOPDM_colorplots} shows similar data for the OPDMs of 2D lattices containing 1, 3, 6, and 10 heavy hexagons. These correspond to 12, 28, 49, and 75 qubits, respectively. The corresponding circuit layouts for these configurations are shown in Fig.~\ref{fig:logicalphysicalmapping}b. As with the one-dimensional system, we consider a state that has evolved over $D = 10$ Floquet cycles for a singular disorder realization. The gap contrast of the spectra from the diagonalized OPDMs is shown in the main text in Fig.~\ref{fig:2dsystem}d.

Figure~\ref{fig:2dOPDM_disc} shows the rescaled discontinuity, similar to that shown in Fig.~\ref{fig2:1dlocalization}d for the one-dimensional circuit. However, unlike the one-dimensional system, this parameter does not exhibit a system-size scaling that depends on the coupling angle $\theta$ values for more than a single heavy hexagon. This phenomenon can be attributed to the strong influence of noise, which technically closes the hard gap in the OPDM spectrum for small $\theta$, resulting in a ``soft" gap (i.e. region of significantly reduced density without a real gap). As a result, the $n\delta$ parameter plateaus at a constant value that is independent of the system size. The thermalization process of the system as a function of system size can still be examined by observing the gap ratio in Fig.~\ref{fig:2dsystem}d, which is applicable to both hard and soft gaps.

\textit{Additional data on LIOMs}. Figure~\ref{figs:LIOM_examples} presents additional examples of individual approximate LIOMs for comparison. Here one can compare individual LIOMs with the `average' approximate LIOM shown in Figure~\ref{fig:1dLIOMs}b in the main text. For demonstration purposes, we have chosen three LIOMs with different levels of imprecision: the smallest, the median, and the highest. It is noteworthy that these operators have very similar shapes. We also plot the time dependence of each approximate LIOM as a function of cycle depth. As expected, the LIOM with the highest imprecision shows noticeable oscillations.

In Fig.~\ref{figs:LIOM_all_1D}, we present a depiction of all 104 Local Integrals of Motion (LIOMs), as experimentally reconstructed in the one-dimensional, 104-qubit experiment. Each LIOM is localized around a specific qubit site, exhibiting a distinct spatial extent and profile influenced by both the interactions and disorder realization. The amplitude of the weight assigned to each LIOM is plotted to showcase the varying spatial contributions across the entirety of weight Paulis. Due to the substantial overlap among the LIOMs, we have opted to distribute them across four axes for clearer visualization.

In Fig.~\ref{figs:LIOM_2d_decomposition}, we decompose the LIOM density shown in Fig.~\ref{fig:2dsystem}e into contributions from one-local to four-local Pauli operators. Similar to the one-dimensional scenario, the dominant contribution comes from the one-local Pauli operators, with a single Pauli~$Z$ acting on the central qubit having the largest contribution. Interestingly, the three-local contribution is nearly equal to the two-qubit contributions, suggesting quantum scrambling dynamics in the system.

\end{document}